\pgfplotsset{compat=newest}
\newcommand{\xmark}{\ding{55}}
\newcommand{\prog}{\ensuremath{\mathcal{P}}\xspace}
\newcommand{\Scal}{\ensuremath{\mathcal{S}}\xspace}
\newcommand{\K}{\ensuremath{\mathbb{K}}\xspace}
\newcommand{\N}{\ensuremath{\mathbb{N}}\xspace}
\newcommand{\R}{\ensuremath{\mathbb{R}}\xspace}
\newcommand{\Q}{\ensuremath{\mathbb{Q}}\xspace}
\newcommand{\Qbar}{\ensuremath{\overline{\mathbb{Q}}}\xspace}
\newcommand{\E}{\ensuremath{\mathbb{E}}}
\renewcommand{\P}{\ensuremath{\mathbb{P}}}
\newcommand{\I}{\ensuremath{\mathbb{I}}}
\newcommand{\skolem}{\hyperref[skolem]{\textsc{Skolem}}\xspace}
\newcommand{\spinv}{\hyperref[spinv]{\textsc{SPInv}}\xspace}
\newcommand{\probspinv}{\hyperref[probspinv]{\textsc{Prob-SPInv}}\xspace}
\newcommand{\ptp}{\hyperref[ptp]{\textsc{P2P}}\xspace}
\newcommand{\lrs}{LRS\xspace}
\newcommand{\nicehat}[1]{\expandafter\hat#1}
\theoremstyle{acmdefinition}
\newtheorem{remark}{Remark}
\newcommand{\problembox}[1]{
    \medskip
    \noindent\ignorespaces\fbox{%
        \hspace{0.3em}%
        \begin{minipage}{\linewidth - 2\fboxsep - 0.6em}
            {#1}
        \end{minipage}%
        \hspace{0.3em}%
    }%
    \medskip
}
\newenvironment{instance}{%
\medskip
\refstepcounter{equation}
\noindent\ignorespaces%
\begin{minipage}{0.85\linewidth}
}{%
\end{minipage}\hfill
\begin{minipage}{0.1\linewidth}
\hfill(\theequation)
\end{minipage}
\medskip
}  
\let\originalleft\left
\let\originalright\right
\renewcommand{\left}{\mathopen{}\mathclose\bgroup\originalleft}
\renewcommand{\right}{\aftergroup\egroup\originalright}
\algnewcommand{\LeftComment}[1]{\Statex \(\triangleright\) \textit{#1}}
\renewcommand{\paragraph}[1]{\par\noindent{{\bf #1}}}
\begin{document}
\title{Strong Invariants Are Hard}
\subtitle{On the Hardness of Strongest Polynomial Invariants for (Probabilistic) Programs}
\author{Julian M{\"u}llner}
\orcid{0009-0006-2909-2297}
\affiliation{%
  \institution{TU Wien}
  \city{Vienna}
  \country{Austria}
}
\email{julian.muellner@tuwien.ac.at}
\author{Marcel Moosbrugger}
\orcid{0000-0002-2006-3741}
\affiliation{%
  \institution{TU Wien}
  \city{Vienna}
  \country{Austria}
}
\email{marcel.moosbrugger@tuwien.ac.at}
\author{Laura Kov\'acs}
\orcid{0000-0002-8299-2714}
\affiliation{%
  \institution{TU Wien}
  \city{Vienna}
  \country{Austria}
}
\email{laura.kovacs@tuwien.ac.at}

\begin{abstract}
We show that computing the strongest polynomial invariant for single-path loops with polynomial assignments is at least as hard as the \textsc{Skolem} problem, a famous problem whose decidability has been open for almost a century.
While the strongest polynomial invariants are computable for \emph{affine loops}, for polynomial loops the problem remained wide open.
As an intermediate result of independent interest, we prove that reachability for discrete polynomial dynamical systems is \textsc{Skolem}-hard as well.
Furthermore, we generalize the notion of invariant ideals and introduce \emph{moment invariant ideals} for probabilistic programs.
With this tool, we further show that the strongest polynomial moment invariant is (i) uncomputable, for probabilistic loops with branching statements, and (ii) \textsc{Skolem}-hard to compute for polynomial probabilistic loops without branching statements.
Finally, we identify a class of probabilistic loops for which the strongest polynomial moment invariant is computable and provide an algorithm for it.
\end{abstract}

\begin{CCSXML}
<ccs2012>
   <concept>
       <concept_id>10003752.10010124.10010138.10010139</concept_id>
       <concept_desc>Theory of computation~Invariants</concept_desc>
       <concept_significance>500</concept_significance>
       </concept>
   <concept>
       <concept_id>10003752.10003753.10003757</concept_id>
       <concept_desc>Theory of computation~Probabilistic computation</concept_desc>
       <concept_significance>500</concept_significance>
       </concept>
   <concept>
       <concept_id>10003752.10003753.10003754</concept_id>
       <concept_desc>Theory of computation~Computability</concept_desc>
       <concept_significance>300</concept_significance>
       </concept>
   <concept>
       <concept_id>10003752.10010061.10010065</concept_id>
       <concept_desc>Theory of computation~Random walks and Markov chains</concept_desc>
       <concept_significance>100</concept_significance>
       </concept>
 </ccs2012>
\end{CCSXML}

\ccsdesc[500]{Theory of computation~Invariants}
\ccsdesc[500]{Theory of computation~Probabilistic computation}
\ccsdesc[300]{Theory of computation~Computability}
\ccsdesc[100]{Theory of computation~Random walks and Markov chains}

\keywords{Strongest algebraic invariant, Point-To-Point reachability, Skolem problem, Probabilistic programs}

\maketitle

\section{Introduction} \label{sec:intro}

Loop invariants describe valid program properties that hold before and after every loop iteration. Intuitively, invariants provide correctness information that may prevent programmers from introducing errors while making changes to the loop.
As such, invariants are fundamental to formalizing program semantics as well as to automate the formal analysis and verification of programs. 
While automatically synthesizing loop invariants is, in general, an uncomputable problem, when considering only single-path loops with linear updates (linear loops), the strongest polynomial invariant is in fact computable~\cite{Karr76,Muller-OlmS04-2,Kovacs08}.
The computability remains intact for linear loops with non-deterministic branching~\cite{DBLP:conf/lics/HrushovskiOP018}.
Yet, already for single-path loops with \lq\lq only\rq\rq\ polynomial updates, computing the strongest invariant has been an open challenge since 2004~\cite{Muller-OlmS04}.
In this paper, we bridge the gap between the computability result for linear loops and the uncomputability result for general loops by providing, to the best of our knowledge, the \emph{first hardness result for computing the strongest polynomial invariant of single-path polynomial loops}.

\noindent
\begin{figure}[t]
    \begin{subfigure}[b]{0.45\textwidth}
        \centering
        \begin{algorithmic}
        \State{$
            \begin{bmatrix}
                f~ & u~ & v~ & w
            \end{bmatrix}
            \gets
            \begin{bmatrix}
                1~ & -1~ & 2~ & 0
            \end{bmatrix}
            $
        }
        \While{$\star$}
            \State{$t \gets 3t+2u-5w$}
            \State{$u \gets u+3w$}
            \State{$v \gets 4u+3v+w$}
            \State{$w \gets t+u+2v$}
        \EndWhile
        \end{algorithmic}
        \caption{An \emph{affine} loop from \cite{Karimov22}.}
        \label{fig:intro:examples:affine}
    \end{subfigure} \hfill
    \begin{subfigure}[b]{0.53\textwidth}
        \centering
        \begin{algorithmic} 
        \State{$
            \begin{bmatrix}
                x~ & y
            \end{bmatrix}
            \gets
            \begin{bmatrix}
                x_0~ & y_0
            \end{bmatrix}
            $
        }
        \While{$\star$}
            \vspace*{1mm}
            \State{$
                \begin{bmatrix}
                    x \\ y
                \end{bmatrix}
                \gets
                \begin{bmatrix}
                    x + y \cdot \Delta_t \\
                    y + \left(y \cdot \left( 1 - x^2 \right) - x \right) \cdot \Delta_t 
                \end{bmatrix}
                $
            }
            \vspace*{1mm}
        \EndWhile
        \end{algorithmic}
        \caption{A \emph{polynomial} loop, modelling the discrete-time Van der Pol oscillator \cite{Dreossi17} for some constant sampling time $\Delta_t$.}
        \label{fig:intro:examples:poly}
    \end{subfigure}
    \caption{Two examples of deterministic programs.}
    \label{fig:intro:examples}
\end{figure}

\paragraph{Problem setting.}
Let us motivate our hardness results using the two loops in Figure~\ref{fig:intro:examples}, showcasing that very small changes in loop arithmetic may significantly increase the difficulty of computing the strongest invariants.
Figure~\ref{fig:intro:examples:affine} depicts an affine loop, that is, a loop where all updates are affine combinations of program variables. On the other hand, Figure~\ref{fig:intro:examples:poly} shows a polynomial loop whose updates are polynomials in program variables.

An affine (polynomial) invariant is a conjunction of affine (polynomial) equalities holding before and after every loop iteration.
The computability of both the strongest affine and polynomial invariant has been studied extensively.
For affine loops, the seminal paper \cite{Karr76} shows that the strongest affine invariant is computable, whereas \cite{Kovacs08} proves computability of the strongest polynomial invariant for single-path affine loops.
Regarding polynomial programs, for example the one in Figure~\ref{fig:intro:examples:poly}, \cite{Muller-OlmS04-2} gives an algorithm to compute all polynomial invariants of \emph{bounded degree}.

Based on these results, the strongest polynomial invariant of Figure~\ref{fig:intro:examples:affine} is thus computable.
Yet, the more general problem of computing the strongest polynomial invariant for \emph{polynomial loops} without any restriction on the degree remained an open challenge since~2004~\cite{Muller-OlmS04}. 
In this paper, we address this challenge, which we coin as the \spinv problem and define below. 

\problembox{The \spinv Problem\label{spinv}: Given a single-path loop with polynomial updates, compute the strongest polynomial invariant.}

In Section~\ref{sec:ptp-spinv}, we prove that \spinv is \emph{very hard}, essentially \lq\lq defending\rq\rq\ the state-of-the-art that so far failed to derive computational bounds on computing the strongest polynomial invariants of polynomial loops.
The crux of our work is based on the \skolem problem, a prominent algebraic problem in the theory of linear recurrences~\cite{Everest-Skolem,Tao-Skolem}, which we briefly recall below and refer to Section~\ref{sec:preliminaries:RecEqs} for details.

\problembox{The \skolem Problem\label{skolem}~\cite{Everest-Skolem,Tao-Skolem}: 
Does a given linear recurrence sequence with constant coefficients have a zero?
}

The decidability of the \skolem problem has been open for almost a century, and its decidability is connected to far-fetching conjectures in number theory \cite{Bilu22,Lipton22}.
In Section~\ref{sec:ptp-spinv}, we show that \spinv is at least as hard as the \skolem problem, providing thus a computational lower bound showcasing the hardness of \spinv. 

To the best of our knowledge, our results from Section~\ref{sec:ptp-spinv} are the first lower bounds for \spinv and provide an answer to the open challenge posed by~\cite{Muller-OlmS04-2}.
While~\cite{Ouaknine23} proved that the strongest polynomial invariant is uncomputable for multi-path polynomial programs, the computability of \spinv has been left open for future work. With our results proving that \spinv is \skolem-hard (Theorem~\ref{theorem:ptp-spinv}), we show that the missing computability proof of \spinv is not surprising: solving \spinv is really hard.

\paragraph{Connecting invariant synthesis and reachability.}
A computational gap also exists in the realm of model-checking between affine and polynomial programs, similar to the computability of \spinv.
Point-to-point reachability is arguably the simplest model-checking property; it asks whether a program can reach a given target state from a given initial state.
For example, one may start the Van der Pol oscillator from Figure~\ref{fig:intro:examples:poly} in some initial configuration $(x_0, y_0)$ and certify that it will eventually reach a certain target configuration $(x_t, y_t)$.
Reachability, and even more involved model-checking properties, are known to be decidable for affine loops \cite{Karimov22}.
However, the decidability or mere reachability of \emph{polynomial loops} remains unknown without any existing non-trivial lower bounds.
We refer to this reachability quest via the \ptp problem.

\problembox{The \hyperref[ptp]{Point-To-Point Reachability}~Problem (\ptp) \label{ptp}: 
Given a single-path loop with polynomial updates, is a given target state reachable starting from a given initial state?
}

In Section~\ref{sec:skolem-ptp}, we resolve the lack of computational results on reachability in polynomial loops.
In particular, we show that \ptp is \skolem-hard (Theorem~\ref{theorem:skolem}) as well.
To reduce \skolem to \ptp, we construct a polynomial loop from a given linear recurrence sequence, such that the loop reaches the all-zero state if and only if the linear recurrence sequence has a zero.
For our reduction, a linear recurrence sequence of order $k$ is encoded as a loop with $k$ variables.
The crux of the reduction in Section~\ref{sec:skolem-ptp} is that every variable is a shifted \lq\lq non-linear variant\rq\rq of the original sequence such that, once any variable becomes $0$, it remains $0$ forever.
Then, the resulting loop reaches the all-zero state if and only if the original sequence has a zero.
To the best of our knowledge, this yields the first non-trivial hardness result for \ptp.

In Section~\ref{sec:ptp-spinv}, we further show that \ptp and \spinv are connected in the sense that \ptp reduces to \spinv.
To reduce \ptp to \spinv, we show how to decide whether a given loop reaches a given target state only using polynomial invariants.
For the reduction, we add an auxiliary variable to the loop that becomes and remains $0$ as soon as the original loop reaches the given target state.
Intuitively, the auxiliary variable is \emph{eventually} invariant if and only if the original loop reaches the target state.
Utilizing techniques from computational algebraic geometry, we show how to decide whether the auxiliary variable is eventually invariant given the strongest polynomial invariant.
Hence, we show that \spinv is at least as hard as \ptp.

Therefore, our reduction chain $\skolem \leq \ptp \leq \spinv$ implies that the decidability of \ptp and/or \spinv would immediately solve the \skolem problem a longstanding conjecture in number theory.

\paragraph{Beyond (non)deterministic loops and invariants.}
In addition to computational limits within standard, (non)deterministic programs, we further establish computational (hardness) bounds in probabilistic loops.
Probabilistic programs model stochastic processes and encode uncertainty information in standard control flow, used for example in cryptography \cite{Barthe2012}, privacy \cite{Barthe2012a}, cyber-physical systems \cite{KofnovMSBB22}, and machine learning \cite{Ghahramani2015}.

Because classical invariants, as in \spinv, do not account for probabilistic information, we provide a proper generalization of the strongest polynomial invariant for probabilistic loops in Section~\ref{sec:prob-invariants} (Lemma~\ref{lemma:moment-inv-generalization}).
With this generalization, we transfer the \spinv problem to the probabilistic setting.
We hence consider the probabilistic version of \spinv as being the \probspinv problem.

\problembox{The \probspinv Problem\label{probspinv}: Given a probabilistic loop with polynomial updates, compute the \lq\lq probabilistic analog\rq\rq\ of the strongest polynomial invariant.}

In Section~\ref{sec:prob-invariants} we prove that \probspinv inherits \skolem-hardness from its classical \spinv analog (Theorem~\ref{thm:spinv-probspinv}). We also show that enriching the probabilistic program model with guards or branching statements renders the strongest polynomial (probabilistic) invariant uncomputable, even in the affine case (Theorems~\ref{thm:prob-invariants-pcp}).
We nevertheless provide a decision procedure when considering \probspinv for a restricted class of polynomial loops:
we define the class of \emph{moment-computable} (polynomial) loops and show that \probspinv is computable for such loops (Algorithm~\ref{alg:moment-inv-ideal}).
Despite being restrictive, our moment-computable loops subsume affine loops with constant probabilistic choice.
As such, Section~\ref{sec:prob-invariants} shows the limits of computability in deriving the strongest polynomial (probabilistic) invariants for probabilistic polynomial loops.

\paragraph{Our contributions.} In conclusion, the main contributions of our work are as follows: 
\begin{itemize}
    \item In Section~\ref{sec:skolem-ptp}, we provide a reduction from \skolem to point-to-point reachability for polynomial loops, proving that \ptp is \skolem-hard (Theorem~\ref{theorem:skolem}).

    \item Section~\ref{sec:ptp-spinv} gives a reduction from \ptp to the problem of computing the strongest polynomial invariant of polynomial loops, establishing the connection between \ptp and \spinv. As such, we prove that \spinv is \skolem-hard (Theorem~\ref{theorem:ptp-spinv}).

    \item In Section~\ref{sec:prob-invariants}, we generalize the concept of strongest polynomial invariants to the probabilistic setting (Lemma~\ref{lemma:moment-inv-generalization}). We show that \probspinv is \skolem-hard (Theorem~\ref{thm:spinv-probspinv}) and uncomputable for general polynomial probabilistic programs (Theorem~\ref{thm:prob-invariants-pcp}), but it becomes computable for moment-computable polynomial probabilistic programs (Algorithm~\ref{alg:moment-inv-ideal}).
\end{itemize}

\section{Preliminaries \label{section:preliminaries}}

Throughout the paper, we write \N for the natural numbers, \Q for the rationals, \R for the reals, and \Qbar for the algebraic numbers.
We denote by $\K[x_1, \ldots, x_k]$ the polynomial ring over $k$ variables with coefficients in some field \K.
Further, we use the symbol \P\ for probability measures and \E\ for the expected value operator.

\subsection{Program Models \label{sec:preliminaries:program-models}}

In accordance with \cite{Ouaknine23,Varonka23}, we consider \emph{polynomial programs} $\prog = (Q, E, q_0)$ over $k$ variables, where $Q$ is a set of locations, $q_0 \in Q$ is an initial location, and $E \subseteq Q \times \Q[x_1, \ldots, x_k]^k \times Q$ is a set of transitions. 
The vector of \emph{variable valuations} is denoted as $\vec{x} = (x_1, \ldots, x_k)$, where each transition $(q, f, q') \in E$ maps a (program) configuration $(q, \vec{x})$ to some configuration $(q', f(\vec{x}))$.
A transition $(q, f, q') \in E$ is \emph{affine} if the function $f$ is affine. 
In case all program transitions $(q, f, q') \in E$ are affine, we say that the polynomial program $\prog$ is an \emph{affine program}.

A \emph{loop} is a program $\mathcal{L} = (Q, E, q_0)$ with exactly two locations $Q = \{ q_0, q_1 \}$, such that the initial state $q_0$ has exactly one outgoing transition to $q_1$ and all outgoing transitions of $q_1$ are self-loops, that is, $E = \{ (q_0, f_1, q_1), (q_1, f_2, q_1), \ldots, (q_1, f_n, q_1) \}$.

In a \emph{guarded program}, each transition is additionally guarded by an equality/inequality predicate among variables of the state vector $\vec{x}$.
If in some configuration the guard of an outgoing transition holds, we say that the transition is \emph{enabled}, otherwise the transition is \emph{disabled}.

\paragraph{(Non)Deterministic programs.}
If for any location $q \in Q$ in a program $\prog$ there is exactly one outgoing transition $(q, f, q')$, then $\prog$ is \emph{deterministic}; otherwise $\prog$ is \emph{nondeterministic}.
A deterministic guarded program may have multiple outgoing transitions from each location, but for any configuration, exactly one outgoing transition must be enabled.
For a guarded nondeterministic program, we require that each configuration has at least one enabled outgoing transition. 
Deterministic, unguarded programs are called \emph{single-path} programs.

To capture the concept of a loop invariant, we consider the collecting semantics of $\prog$, associating each location $q \in Q$ with a set of vectors $\Scal_q$ that are reachable from the initial state $(q_0, \vec{0})$.
More formally, the sets $\{ \Scal_q \mid q \in Q \}$ are the least solution of the inclusion system
\[
    \Scal_{q_0} \supseteq \{ \vec{0} \} \qquad \text{and} \qquad \Scal_{q'} \supseteq f(\Scal_q) \quad \text{for all } (q, f, q') \in E.
\]
\begin{definition}[Invariant]\label{def:inv}
A polynomial $p \in \overline{\Q}[x_1, \ldots, x_k]$ is an \emph{invariant} with respect to program location $q \in Q$, if for all reachable configurations $\vec{x} \in \Scal_q$ the polynomial vanishes, that is $p(\vec{x}) = 0$.
Moreover, for a loop $\mathcal{L}$, the polynomial $p$ is an \emph{invariant of $\mathcal{L}$}, if $p$ is an invariant with respect to the looping state $q_1$.
\end{definition}

\paragraph{Probabilistic programs.}
In probabilistic programs, a probability $pr$ is added to each program transition. That is, $E \subseteq Q \times \Q[x_1, \ldots, x_k]^k \times (0,1] \times Q$, where we require that each location has countably many outgoing transitions and that their probabilities $pr$ sum up to $1$.
Under the intended semantics, a transition $(q, f, pr, q')$ then maps a configuration $(q, \vec{x})$ to configuration $(q', f(\vec{x}))$ with probability $pr$. 
Again, for guarded probabilistic programs, we require that each configuration has at least one enabled outgoing transition and that the probabilities of the enabled transition sum up to $1$.

For probabilistic programs $\prog$, we consider moment invariants over higher-order statistical moments of the probability distributions induced by $\prog$ (see Section~\ref{sec:prob-invariants}).
In this respect, it is necessary to count the number of executed transitions in the semantics of $\prog$.
Formally, the sets $\{ \Scal_q^n \mid q \in Q, n \in \N_0 \}$ are defined as 
\[
    \Scal_{q_0}^0 \coloneqq \{ \vec{0} \} \qquad \text{and} \qquad \Scal_{q'}^{n+1} \coloneqq f \left( \Scal_{q}^n \right) \quad \text{for all } (q, f, pr, q') \in E \text{ and } n \in \N_0.
\]
In addition, the probability of a configuration $\vec{x}$ in location $q$ after $n$ iterations, in symbols $\P(\vec{x} \mid \Scal_q^n)$, can be defined inductively:
(i) in the initial state, the configuration $\vec{0}$ after $0$ executed transitions has probability $1$; 
(ii) for any other state, the probability of reaching a specific configuration is defined by summing up the probabilities of all incoming paths.
More formally, the probability $\P(\vec{x} \mid \Scal_q^n)$ is 
\[
    \P \left( \vec{x} \mid \Scal_q^0 \right) \coloneqq \begin{cases}
        1 & q = q_0 \land \vec{x} = \vec{0} \\
        0 & \text{otherwise}
    \end{cases} \qquad \text{and} \qquad \P \left( \vec{x} \mid \Scal_{q'}^{n+1} \right) \coloneqq \sum_{(q,f,pr,q') \in E} \ \sum_{\vec{y} \in f^{-1}(\vec{x})} pr \cdot \P(\vec{y} \mid \Scal_{q}^{n}).
\]
We then define the $n$th higher-order statistical moment of a monomial $M$ in program variables as the expected value of $M$ after $n$ loop iterations. Namely,
\begin{equation}\label{eq:EMn}
    \E[M_n] \coloneqq \sum_{q \in Q} \sum_{\vec{x} \in \Scal_{q}^{n}} M(\vec{x}) \cdot \P(\vec{x} \mid \Scal_{q}^{n}), 
\end{equation}
where $M(\vec{x})$ evaluates the monomial $M$ in a specific configuration $\vec{x}$. 

\begin{example}\label{ex:prob-loop}
The following loop encodes a symmetric 1-dimensional random walk starting at $0$.
In every step, the random walk moves left or right with probability $\nicefrac{1}{2}$.
The loop is given in code:

\begin{minipage}{\linewidth}
\begin{algorithmic}
\State{$x \gets 0$
}
\While{$\star$}
    \vspace*{1mm}
    \State{$x \gets x + 1 \ \left[ \nicefrac{1}{2} \right] \ x-1$}
    \vspace*{1mm}
\EndWhile
\end{algorithmic}
\end{minipage}

Replacing the probabilistic choice in the loop body with non-deterministic choice, results in a non-deterministic program.
\end{example}

\paragraph{Universality of loops.}
In this paper, we focus on polynomial loops.
This is justified by the universality of loops \cite[Section~4]{Ouaknine23}, as every polynomial program can be transformed into a polynomial loop that preserves the collecting semantics.
Intuitively, this is done by merging all program states into the looping state and by introducing additional variables that keep track of which state is actually active while invalidating infeasible traces.
It is then possible to recover the sets $\Scal_q^{(n)}$ of the original program from the sets $\Scal_q^{(n)}$ of the loop.

\subsection{Computational Algebraic Geometry \& Strongest Invariants\label{sec:preliminaries:alg-geom}}

We study polynomial invariants $p(\vec{x})$ of polynomial programs;
here, $p(\vec{x})$ are multivariate polynomials in program variables $\vec{x}$. 
We therefore recap necessary terminology from algebraic geometry~\cite{CoxLittleOshea97}, to support us in reasoning whether $p(\vec{x})=0$ is a loop invariant.
In the following $\K$ denotes a field, such as $\R$, $\Q$ or $\overline{\Q}$.

\begin{definition}[Ideal]\label{def:ideal}
A subset of polynomials $I \subseteq \K[x_1, \ldots, x_k]$ is an \emph{ideal} if (i) $0 \in I$; (ii) for all $x,y \in I$: $x+y \in I$; and (iii) for all $x \in I$ and $y \in \K[x_1, \ldots, x_k]$: $xy \in I$.
For polynomials $p_1, \ldots, p_l \in \K[x_1, \ldots, x_k]$ we denote by $\langle p_1, \ldots, p_l \rangle$ the ideal generated by these polynomials, that is
\begin{equation*}
\langle p_1, \ldots, p_l \rangle := \left\{ \sum_{i = 1}^l q_i p_i \ \middle\vert\ q_1, \ldots q_k \in \K[x_1, \ldots, x_k] \right\}
\end{equation*}

The set $I = \langle p_1, \ldots, p_l \rangle$ is an ideal, with the polynomials $p_1, \ldots, p_l$ being a \emph{basis} of $I$.
\end{definition}

Of particular importance to our work is the set of all polynomial invariants of a program location.
It is easy to check that this set forms an ideal.

\begin{definition}[Invariant Ideal]\label{def:invIdeal}
Let $\mathcal{P}$ be a program with location $q$.
The set $\mathcal{I}$ of all invariants with respect to the location $q$ is called the \emph{invariant ideal} of $q$.
If $\mathcal{P}$ is a loop and $\mathcal{I}$ is the invariant ideal with respect to the looping state $q_1$, we call $\mathcal{I}$ the invariant ideal of the loop $\mathcal{P}$
\footnote{
Computing bases for invariant ideals is equivalent to computing the \emph{Zariski closure} of the loop:
the Zariski closure is the smallest algebraic set containing the set of reachable states~\cite{DBLP:conf/lics/HrushovskiOP018}.}.
\end{definition}

As the invariant ideal $\mathcal{I}$ of a loop $\mathcal{L}$ contains \emph{all} polynomial invariants, a basis for $\mathcal{I}$ is the strongest polynomial invariant of $\mathcal{L}$.
This is further justified by the following key result, establishing that every ideal has a basis.

\begin{theorem}[Hilbert's Basis Theorem] 
Every ideal $I \subseteq \K[x_1, \ldots, x_k]$ has a basis. 
That is, $I = \langle p_1, \ldots, p_l \rangle$ for some $p_1, \ldots, p_l \in I$.
\end{theorem}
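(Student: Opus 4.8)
The plan is to prove the equivalent formulation that the ring $\K[x_1, \ldots, x_k]$ is \emph{Noetherian} --- i.e.\ that every ideal is finitely generated --- by induction on the number of variables $k$. The workhorse is the inductive step, which I would state in the standard general form: if $R$ is a commutative ring in which every ideal is finitely generated, then the same holds for the univariate polynomial ring $R[x]$. Granting this, the theorem is immediate: a field $\K$ has only the two ideals $\langle 0 \rangle$ and $\langle 1 \rangle$, so it is trivially Noetherian, and since $\K[x_1, \ldots, x_k] = (\K[x_1, \ldots, x_{k-1}])[x_k]$, applying the inductive step $k$ times shows every ideal of $\K[x_1, \ldots, x_k]$ admits a finite basis $p_1, \ldots, p_l$.

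For the inductive step, fix an ideal $I \subseteq R[x]$. The idea is to organize $I$ by degree: for each $d \in \N$ let $L_d \subseteq R$ be the set of leading coefficients of the degree-$d$ elements of $I$, together with $0$. I would first check that each $L_d$ is an ideal of $R$ (closure under addition uses that a difference of two degree-$d$ polynomials in $I$ is again in $I$ and is either of degree $d$ or of smaller degree, the latter contributing leading coefficient $0$), and that $L_0 \subseteq L_1 \subseteq L_2 \subseteq \cdots$, since multiplying a polynomial by $x$ raises its degree by one without changing its leading coefficient. Because $R$ is Noetherian, this ascending chain stabilizes at some index $N$, and each of $L_0, \ldots, L_N$ is finitely generated. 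For each $d \le N$ I would pick finitely many polynomials in $I$ of degree $d$ whose leading coefficients generate $L_d$, and let $G$ be the finite union of all these chosen polynomials over $d = 0, \ldots, N$.

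It then remains to show $I = \langle G \rangle$. I would argue by contradiction: suppose some $f \in I \setminus \langle G \rangle$, and choose one of minimal degree $d$. Its leading coefficient lies in $L_{\min(d,N)}$, hence is an $R$-linear combination of the leading coefficients of the chosen generators of degree $\min(d,N)$; multiplying those generators by $x^{\,d - \min(d,N)}$ and forming the matching combination produces $g \in \langle G \rangle$ with the same degree and leading coefficient as $f$, so $f - g \in I \setminus \langle G \rangle$ has strictly smaller degree, contradicting minimality. Thus $I = \langle G \rangle$, completing the induction. The only subtle point --- and the real content of the argument --- is extracting the correct finiteness input: one needs simultaneously that the leading-coefficient ideals $L_d$ are each finitely generated \emph{and} that they eventually stop growing, which is precisely why the hypothesis is best phrased as the ascending chain condition / "every ideal finitely generated"; the degree bookkeeping in the final step is then entirely routine.
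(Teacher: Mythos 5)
The paper does not prove Hilbert's Basis Theorem; it is stated in Section~2.2 purely as classical background, imported to justify that the invariant ideal has a finite basis (a strongest polynomial invariant), and the reader is referred to the algebraic-geometry literature. So there is no proof in the paper to compare against.

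That said, your proposal is a correct and complete rendition of the standard textbook proof: reduce to the assertion that a polynomial ring over a Noetherian ring is Noetherian, establish the base case for a field, and carry out the inductive step via the chain of leading-coefficient ideals $L_0 \subseteq L_1 \subseteq \cdots$, using the ascending chain condition to stabilize at some $N$ and then arguing by minimal counterexample degree that the finitely many chosen polynomials generate $I$. One small point of hygiene in the final contradiction: when $f - g$ turns out to be $0$, you cannot say $f - g \in I \setminus \langle G \rangle$ of smaller degree; instead you conclude directly that $f = g \in \langle G \rangle$, which is the contradiction. As written, ``strictly smaller degree'' glosses over this degree-$(-\infty)$ case, but the fix is one sentence and the argument is otherwise sound.
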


While an ideal $I$ may have infinitely many bases, the work of~\cite{Buchberger-thesis} proved that every ideal $I$ has a {unique} (reduced) \emph{Gröbner basis}, where uniqueness is guaranteed modulo some \emph{monomial order}. 
A {monomial order $<$} is a total order on all monomials such that for all monomials $m_1, m_2, m_3$, if $m_1 < m_2$ then $m_1 m_3 < m_2 m_3$.
For instance, assume our polynomial ring is $\K[x, y, z]$, that is, over three variables $x$, $y$, and $z$.
A total order $z < y < x$ over variables can be extended to a lexicographic ordering on monomials, denoted also by $<$ for simplicity.
In this case, for example, $x y z^3 < x y^2$ and $y^2z < x$.
For a given monomial order, one can consider the leading term of a polynomial $p$ which we denote by $LT(p)$.
For a set of polynomials $S$ we write $LT(S)$ for the set of all leading terms of all polynomials.
Continuing the  example mentioned before, we have $LT(xyz^3 + xy^2) = xy^2$ and $LT(\{ y + z, y^2z + x \}) = \{ y, x \}$.

\begin{definition}[Gröbner Basis]
Let $I \subseteq \K[x_1, \ldots, x_k]$ be an ideal and fix a monomial order.
A basis $G = \{ g_1, \ldots, g_k \}$ of $I$ is a \emph{Gröbner basis}, if $\langle LT(g_1), \ldots, LT(g_l) \rangle = \langle LT(I) \rangle$.
Further, $G$ is a \emph{reduced Gröbner basis} if every $g_i$ has leading coefficient $1$ and for all $g, h \in G$ with $g \neq h$, no monomial in $g$ is a multiple of $LT(h)$.
\end{definition}

Gröbner bases provide the workhorses to compute and implement algebraic operations over (infinite) ideals, including ideal intersections/unions, variable eliminations, and polynomial memberships.
Given \emph{any} basis for an ideal $I$, a unique reduced Gröbner basis with respect to any monomial ordering $<$ is computable using Buchberger's algorithm~\cite{Buchberger-thesis}.
A central property of Gröbner basis computation is that repeated division of a polynomial $p$ by elements of a Gröbner basis results in a unique remainder, regardless of the order in which the divisions are performed.
Hence, to decide if a polynomial $p$ is an element of an ideal $I$, that is deciding polynomial membership, it suffices to divide $p$ by a Gröbner basis of $I$ and check if the remainder is $0$. 
Moreover, eliminating a variable $y$ from an ideal $I \subseteq \K[x,y]$ is performed by computing the Gr\"ober basis of the elimination ideal $I \cap \K[x]$ only over $x$.

\subsection{Recurrence Equations}\label{sec:preliminaries:RecEqs}

Recurrence equations relate elements of a sequence to previous elements.
There is a strong connection between recurrence equations and program loops:
assignments in program loops relate values of program variables in the current iteration to the values in the next iteration. It is therefore handy to interpret a (polynomial) program loop as a recurrence.
We briefly introduce linear and polynomial recurrence systems and refer to~\cite{kauers2011concrete} for details.

We say that a sequence $u(n): \N_0 \to \Q$ is a \emph{linear recurrence sequence (\lrs)} of order $k$, if there are coefficients $a_0, \ldots, a_{k-1} \in \Q$, where $a_0 \neq 0$ and for all $n \in \N_0$ we have
\begin{equation}\label{eq:Cfinite}
    u(n+k) = a_{k-1} u(n+k{-}1) + \ldots + a_1 u(n+1) + a_0 u(n)
\end{equation}
The recurrence equation~\eqref{eq:Cfinite} is called a \emph{linear recurrence equation}, with the coefficients $a_0, \ldots, a_{k-1}$ and the initial values $u(0), \ldots, u(k{-}1)$ uniquely specifying the sequence $u(n)$.
Any \lrs $u(n)$ of order $k$ as defined via~\eqref{eq:Cfinite} can be specified by a system of $k$ linear recurrence sequences $u_1(n), \ldots, u_k(n)$, such that each $u_i(n)$ is of order $1$ and, for all $n \in \N_0$, we have $u(n) = u_1(n)$ and
\begin{align}
    u_1(n+1) &= \sum_{i=1}^k a^{(1)}_i u_i(n) = a^{(1)}_1 u_1(n) + \ldots + a^{(1)}_k u_k(n) \nonumber \\
    & \vdots \label{eq:CFiniteRecs}\\
    u_k(n+1) &= \sum_{i=1}^k a^{(k)}_i u_i(n) = a^{(k)}_1 u_1(n) + \ldots + a^{(k)}_k u_k(n) \nonumber
\end{align}
Again, the \lrs $u(n)$ is uniquely defined by the coefficients $a_i^{(j)}$ and the initial values $u_1(0), \ldots, u_k(0)$.

\emph{Polynomial recursive sequences} are natural generalizations of linear recurrence sequences and allow not only linear combinations of sequence elements but also polynomial combinations \cite{CadilhacMPPS20}.
More formally, a sequence $u(n)$ is \emph{polynomial recursive}, if there exists $k \in \N$ sequences $u^1(n), \ldots, u^k(n): \N_0 \to \Q$ such that $u(n) = u_1(n)$ and there are polynomials $p_1, \ldots, p_k \in \Q[u_1, \ldots, u_k]$ such that, for all $n \in \N_0$, we have
\begin{align}
    u_1(n+1) &= p_1( u_1(n), \ldots, u_k(n)) \nonumber \\
    &\vdots \label{eq:PolyRec} \\
    u_k(n+1) &= p_k(u_1(n), \ldots, u_k(n)) \nonumber
\end{align}
The sequence $u(n)$ from~\eqref{eq:PolyRec} is uniquely defined by the polynomials $p_1, \ldots, p_k$ and the initial values $u_1(0), \ldots, u_k(0)$.
In contrast to linear recurrence sequences~\eqref{eq:Cfinite}, polynomial recursive sequences~\eqref{eq:PolyRec} \emph{cannot} be in general modeled using a single polynomial recurrence~\cite{CadilhacMPPS20}.
Systems of recurrences are widely used to model the evolution of dynamical systems in discrete time.

We conclude this section by recalling the \skolem problem~\cite{Bilu22,Lipton22} related to linear recurrence sequences, whose decidability is an open question since the 1930s.
We formally revise the definition from Section~\ref{sec:intro} as:

\problembox{The \skolem Problem \cite{Everest-Skolem,Tao-Skolem}: Given an \lrs $u(n), n \in \N_0$, does there exist some $m \in \N_0$ such that $u(m) = 0$?}

In the upcoming sections, we show that the \skolem problem is reducible to the decidability of three fundamental problems in programming languages, namely \ptp{}, \spinv{} and \probspinv{} from Section~\ref{sec:intro}.
As such, we prove that the \skolem problem gives us intrinsically hard computational lower bounds for \ptp{}, \spinv{}, and \probspinv{}.

\section{Hardness of Reachability in Polynomial Programs}\label{sec:skolem-ptp}

We first address the computational limitations of reachability analysis within polynomial programs. 
It is decidable whether a loop with \emph{affine} assignments reaches a target state from a given initial state~\cite{KannanL80}.
Additionally, even problems generalizing reachability are known to be decidable for linear loops, such as various model-checking problems~\cite{Karimov22}.
However, reachability for loops with polynomial assignments, or equivalently discrete-time polynomial dynamical systems, has been an open challenge. In this section, we address this reachability challenge via our \ptp{} problem, showing that reachability in polynomial program loops is at least as hard as the \skolem problem (Theorem~\ref{theorem:skolem}). 
To this end, let us revisit and formally define our \ptp problem from Section~\ref{sec:intro}, as follows.

\problembox{The Point-To-Point Reachability Problem (\ptp): Given a system of $k$ polynomial recursive sequences $u_1(n), \ldots, u_k(n), n \in \N_0$ and a target vector $\vec{t} = (t_1, \ldots, t_k)$, does there exist some $m \in \N_0$ such that for all $1 \leq i \leq k$, it holds that $u_i(m) = t_i$?}
To the best of our knowledge, nothing is known about the hardness of \ptp for polynomial recursive sequences\footnote{For linear systems, the Point-To-Point Reachability problem (\ptp) is also referred to as the \emph{Orbit problem} in~\cite{KannanL80}.}, and hence for loops with arbitrary polynomial assignments, apart from the trivial lower bounds provided by the linear/affine cases~\cite{KannanL80,Karimov22}.

In the sequel, in Theorem~\ref{theorem:skolem} we prove that the \ptp problem for polynomial recursive sequences is \emph{at least as hard} as \skolem.
Doing so, we show that solving \skolem can be solved by \emph{reducing} it to inputs for \ptp, written in symbols as $\skolem \leq \ptp$.
We thus establish a computational lower bound for \ptp in the sense that providing a decision procedure for \ptp for polynomial recursive sequences would prove the decidability of the long-lasting open decision problem given by \skolem.

\paragraph{Our reduction for $\skolem\leq\ptp$.}
In a nutshell, we fix an arbitrary \skolem instance, that is, a linear recurrence sequence $u(n)$ of order $k$.
We say that the instance $u(n)$ is \emph{positive}, if there exists some $m \in \N_0$ such that $u(m) = 0$, otherwise we call the instance \emph{negative}.
Our reduction $\skolem \leq \ptp$ constructs an instance of \ptp that reaches the all-zero vector $\vec{0}$ if and only if the \skolem instance is positive.
Hence, a decision procedure for \ptp would directly lead to a decision procedure for \skolem.

Following~\eqref{eq:Cfinite}, let our \skolem instance of order $k$ to be the \lrs $u(n): \N_0 \to \Q$ specified by coefficients $a_0, \ldots a_{k-1} \in \Q$ such that $a_0 \neq 0$ and, for all $n \in \N_0$, we have
\begin{equation}\label{eq:skolem:instance}
    u(n + k) = a_{k-1} \cdot u(n+k-1) + \ldots + a_1 \cdot u(n+1) + a_0 \cdot u(n) = \sum_{i = 0}^{k-1} a_{i} \cdot u(n+i).
\end{equation}
From our \skolem instance~\eqref{eq:skolem:instance}, we construct a system of $k$ polynomial recursive sequences $x_0, \ldots, x_{k-1}$, as given in~\eqref{eq:PolyRec}. Namely, the initial sequence values are defined inductively as
\[
    \boxed{x_0(0) \coloneqq u(0)} \qquad 
    \boxed{x_i(0) \coloneqq u(i) \cdot \prod_{\ell = 0}^{i-1} x_\ell(0) \qquad (1 \leq i < k)}
\]
With the initial values defined, the sequences $x_0, \ldots, x_{k-1}$ are uniquely defined via the following system of recurrence equations:
\begin{equation}\label{strongest-invariant:hardness:simulation}
        \boxed{
            x_i(n+1) \coloneqq x_{i+1}(n) \qquad (1 \leq i < k-1)
        } \qquad \boxed{
        x_{k-1}(n+1) \coloneqq \sum_{i = 0}^{k-1} a_{i} \cdot x_i(n) \cdot \prod_{\ell=i}^{k-1} x_\ell(n)
        }
\end{equation}
Intuitively, the $x_i$ sequences are \lq\lq non-linear variants\rq\rq\ of the \skolem instance $u(n)$ such that, once any $x_i$ reaches $0$, $x_i$ remains $0$ forever.
The target vector for our \ptp instance is therefore $\vec{t} = \vec{0}$.

Let us illustrate the main idea of our construction with the following example.

\begin{example} \label{ex:skolem-reduction}
Assume our \skolem instance from~\eqref{eq:skolem:instance} is given by the recurrence $u(n{+}3) = 2u(n{+}2) -2u(n{+}1) -12u(n)$ and the initial values $u(0) = 2, u(1) = -3, u(2) = 3$.
Following our reduction \eqref{strongest-invariant:hardness:simulation}, we construct a system of polynomial recursive sequences $x_i(n)$:
\begin{align*}
    x_0(0) &= u(0) = 2 \quad & x_0(n+1) &= x_1(n) \\
    x_1(0) &= u(1) x_0(0) = -6 \quad & x_1(n+1) &= x_2(n) \\
    x_2(0) &= u(2) x_0(0) x_1(0) = -36 \quad & x_2(n+1) &= 2 x_{2}(n)^2 -2 x_{1}(n)^2 x_2(n) -12 x_0(n)^2 x_1(n) x_2(n)
\end{align*}
The first few sequence elements of $u(n)$ and $x_0(n)$ are shown in Figure~\ref{fig:skolem-ptp-example} and illustrate the key property of our reduction:
\begin{itemize}
\item[(i)] $x_0(n)$ is non-zero as long as $u(n)$ is non-zero, which we prove in Lemma~\ref{strongest-invariant:hardness:simulation-lemma}; 
\item[(ii)] if there is an $N$ such that $u(N) = 0$, it holds that for all $n \geq N: x_0(n) = 0$.
The other sequences $x_1$ and $x_2$ in the system are ``shifted'' variants of $x_0$.
Hence, the constructed sequences all eventually reach the all-zero configuration and remain there.
In Theorem~\ref{theorem:skolem}, we prove that this is the case if and only if the  \skolem instance $u(n)$ is positive.
\end{itemize}
\begin{figure}[tb]
    \centering
    \includegraphics[width=\textwidth]{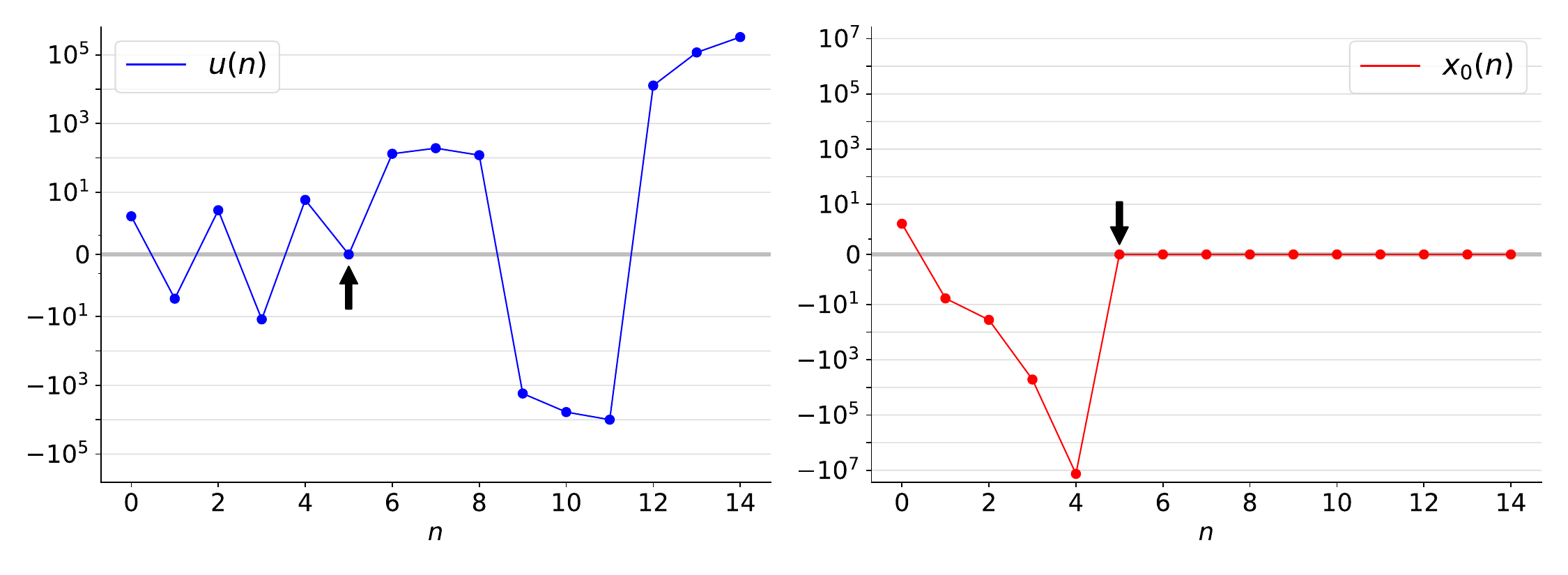}
    \caption{The first $15$ sequence elements of $u(n)$ and $x_0(n)$ in Example~\ref{ex:skolem-reduction}.}
    \label{fig:skolem-ptp-example}
\end{figure}
\end{example}

\paragraph{Correctness of $\skolem\leq\ptp$.} 
To prove the correctness of our reduction $\skolem \leq \ptp$ and to assert the properties (i)-(ii) of Example~\ref{ex:skolem-reduction} among $u(n)$ and $x_i(n)$, we introduce $k$ auxiliary variables $s_0, \ldots, s_{k-1}$ defined as
\begin{equation*}
    \boxed{\begin{aligned}
        s_i(0) \coloneqq \begin{cases}
            1 & (i = 0) \\
            \prod_{\ell = 0}^{i-1} x_\ell(0) & (1 \leq i < k)
        \end{cases}
    \end{aligned}} \qquad
    \boxed{\begin{aligned}
        s_i(n+1) \coloneqq \begin{cases}
            s_{i+1}(n) & (i \neq k-1) \\
            s_{k-1}(n) \cdot x_{k-1}(n) & (i = k-1)
        \end{cases}
    \end{aligned}}
\end{equation*}

As illustrated in Example~\ref{ex:skolem-reduction}, the high-level idea of our reduction is that the $x_i$ sequences are \lq\lq non-linear variants\rq\rq\ of the \skolem instance $u(n)$ such that, once any $x_i$ reaches $0$, $x_i$ remains $0$ forever.
With the next lemma, we make the connections between the sequences $x_i(n)$ and $u(n)$ precise, using the auxiliary sequences $s_i(n)$.
The central connection is $x_0(n) = s_0(n) \cdot u(n)$ and $s_0(n) = \prod_{l=0}^{n-1} x_0(l)$, which we utilize in the correctness proof in Theorem~\ref{theorem:skolem}.
The main idea behind the construction of the \ptp-instance is to ensure that this connection, and similar connections for the other sequences $x_i$ and $s_i$, do hold.
We formally prove these properties by induction.

\begin{lemma}\label{strongest-invariant:hardness:simulation-lemma}
For the system of polynomial recursive sequences in \eqref{strongest-invariant:hardness:simulation}, it holds that $\forall n \geq 0$ and $0 \leq i < k$
\begin{align}
    x_i(n) &= s_i(n) \cdot u(n + i), \text{ and } \label{strongest-invariant:hardness:simulation:eq1} \\
    s_i(n) &= \prod_{\ell = 0}^{n-1}  x_0(\ell)  \cdot \prod_{\ell = 0}^{i-1}  x_\ell(n). \label{strongest-invariant:hardness:simulation:eq2}
\end{align}
\end{lemma}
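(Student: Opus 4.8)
The plan is to establish the two identities \eqref{strongest-invariant:hardness:simulation:eq1} and \eqref{strongest-invariant:hardness:simulation:eq2} \emph{simultaneously} by induction on $n$, for all $0 \le i < k$ at once. A simultaneous induction is needed because the update of $x_{k-1}$ in \eqref{strongest-invariant:hardness:simulation} mixes all the $x_i$, and matching $x_{k-1}(n+1)$ against $u(n+k)$ only works once the product form \eqref{strongest-invariant:hardness:simulation:eq2} of the auxiliary sequences $s_i$ is available. Throughout I adopt the convention that every empty product (in particular $\prod_{\ell=0}^{-1}(\cdot)$) equals $1$, and I record the two elementary facts that will be used repeatedly: the index shift $x_\ell(n+1) = x_{\ell+1}(n)$ for $0 \le \ell \le k-2$ from \eqref{strongest-invariant:hardness:simulation}, and the defining recurrences $s_i(n+1) = s_{i+1}(n)$ for $i \neq k-1$ and $s_{k-1}(n+1) = s_{k-1}(n)\cdot x_{k-1}(n)$. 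For the base case $n = 0$, after deleting empty products, \eqref{strongest-invariant:hardness:simulation:eq2} becomes $s_i(0) = \prod_{\ell=0}^{i-1} x_\ell(0)$ and \eqref{strongest-invariant:hardness:simulation:eq1} becomes $x_i(0) = u(i)\prod_{\ell=0}^{i-1} x_\ell(0)$, and both hold verbatim by the definitions of $s_i(0)$ and $x_i(0)$ (with $s_0(0) = 1$).

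For the inductive step I first prove \eqref{strongest-invariant:hardness:simulation:eq2} at $n+1$, assuming both identities at $n$. For $i \le k-2$ I rewrite $s_i(n+1) = s_{i+1}(n)$, expand $s_{i+1}(n)$ by the induction hypothesis, and reduce the claim to the purely algebraic identity $\prod_{\ell=0}^{i} x_\ell(n) = x_0(n)\cdot\prod_{\ell=0}^{i-1} x_\ell(n+1)$, which follows by applying the index shift to the right-hand side. For $i = k-1$ I use $s_{k-1}(n+1) = s_{k-1}(n)\cdot x_{k-1}(n)$ together with the induction hypothesis for $s_{k-1}(n)$ to get $s_{k-1}(n+1) = \prod_{\ell=0}^{n-1} x_0(\ell)\cdot\prod_{\ell=0}^{k-1} x_\ell(n)$, and again convert $\prod_{\ell=0}^{k-2} x_\ell(n+1)$ via the index shift to recover the target form.

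Next I prove \eqref{strongest-invariant:hardness:simulation:eq1} at $n+1$. For $i \le k-2$ it is immediate: $x_i(n+1) = x_{i+1}(n) = s_{i+1}(n)\,u(n+i+1) = s_i(n+1)\,u(n+i+1)$, combining the recurrences for $x_i$ and $s_i$ with the induction hypothesis. The case $i = k-1$ is the crux. Starting from the \skolem recurrence \eqref{eq:skolem:instance}, $u(n+k) = \sum_{i=0}^{k-1} a_i\,u(n+i)$, it suffices to verify, for every $0 \le i \le k-1$,
\[
    s_{k-1}(n+1)\cdot u(n+i) \;=\; x_i(n)\cdot\prod_{\ell=i}^{k-1} x_\ell(n),
\]
since multiplying by $a_i$ and summing over $i$ turns $s_{k-1}(n+1)\cdot u(n+k)$ into exactly the right-hand side of the update of $x_{k-1}(n+1)$ in \eqref{strongest-invariant:hardness:simulation}. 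This identity follows by combining the product forms of $s_{k-1}(n+1)$ and $s_i(n)$ obtained from \eqref{strongest-invariant:hardness:simulation:eq2} — which give $s_{k-1}(n+1) = s_i(n)\cdot\prod_{\ell=i}^{k-1} x_\ell(n)$ as a literal equality of products — with the induction hypothesis $x_i(n) = s_i(n)\,u(n+i)$.

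The only step requiring genuine care is this $i = k-1$ case of \eqref{strongest-invariant:hardness:simulation:eq1}: the seductive shortcut of dividing $x_i(n) = s_i(n)\,u(n+i)$ by $s_i(n)$ is illegitimate because $s_i(n)$ may vanish, so one must instead route everything through the explicit product expressions of \eqref{strongest-invariant:hardness:simulation:eq2} and establish $s_{k-1}(n+1) = s_i(n)\cdot\prod_{\ell=i}^{k-1} x_\ell(n)$ by expanding both sides into products of the $x_\ell$, before multiplying by $u(n+i)$ and applying the hypothesis. Everything else — the empty-product conventions, the degenerate order-one case $k=1$, and the index ranges in \eqref{strongest-invariant:hardness:simulation} — is routine bookkeeping.
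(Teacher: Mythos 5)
Your proof is correct and follows essentially the same route as the paper's. The only difference is organizational: you run a simple induction on $n$ that handles all indices $0 \le i < k$ simultaneously (proving \eqref{strongest-invariant:hardness:simulation:eq2} first, then \eqref{strongest-invariant:hardness:simulation:eq1}), whereas the paper packages the same dependency structure as a well-founded induction on the lexicographic order $(n,i)$; these are interchangeable here. Your presentation of the crux $i = k-1$ case --- factoring out the single product identity $s_{k-1}(n+1)\cdot u(n+i) = x_i(n)\cdot\prod_{\ell=i}^{k-1} x_\ell(n)$ before multiplying by $a_i$ and summing, rather than the paper's equivalence-chain argument --- is slightly more direct, and your explicit warning against dividing by a possibly-zero $s_i(n)$ is a sound observation that the paper leaves implicit.
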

\begin{proof}
We prove the two properties by well-founded induction on the lexicographic order $(n, i)$, where $n \geq 0$ and $0 \leq i < k$. 
Here, $(n, i) \leq (n', i')$ if and only if $n < n'$ or $n = n' \ \land \ i < i'$.
The order has the unique least element $(0,0)$. 

\noindent\emph{Base case: $n = 0$.}
If $i = 0$, then properties \eqref{strongest-invariant:hardness:simulation:eq1} and \eqref{strongest-invariant:hardness:simulation:eq2} hold by definition of $s_0(0) \coloneqq 1 = \prod_{\ell = 0}^{-1}  x_0(\ell) \cdot \prod_{\ell = 0}^{-1}  x_\ell(0)$ and $x_0(0) \coloneqq u(0) = s_0(0) \cdot u(0)$.
Also, if $0 < i < k$, then properties \eqref{strongest-invariant:hardness:simulation:eq1} and \eqref{strongest-invariant:hardness:simulation:eq2} are trivially satisfied by the definition of the initial values: $s_i(0) \coloneqq \prod_{\ell = 0}^{i-1}  x_\ell(0)$ and $x_i(0) \coloneqq u(i) \cdot \prod_{\ell = 0}^{i-1} x_\ell(0) = u(i) \cdot s_i(0)$.

\noindent\emph{Induction step -- Case 1: $n > 0 \ \land \ 0 \leq i < k{-}1$.}
By the lexicographical ordering, it holds that $(n, i{+}1) < (n{+}1, i)$. Hence, we can assume that properties \eqref{strongest-invariant:hardness:simulation:eq1} and \eqref{strongest-invariant:hardness:simulation:eq2} hold for $(n, i {+} 1)$.
Thus, we have the induction hypothesis
\begin{align}
        x_{i+1}(n) &= s_{i+1}(n) \cdot u(n + i + 1), \text{ and } \label{strongest-invariant:hardness:simulation:step2:eq1} \\
        s_{i+1}(n) &= \prod_{\ell = 0}^{n-1} x_0(\ell) \cdot \prod_{\ell = 0}^{i}  x_\ell(n). \label{strongest-invariant:hardness:simulation:step2:eq2}
\end{align}
To prove property \eqref{strongest-invariant:hardness:simulation:eq1} for $(n{+}1, i)$ means to show that 
\[
    x_{i}(n+1) = s_{i}(n+1) \cdot u(n + i + 1).
\]
The sequences $x_i$ and $s_i$ are defined by $x_i(n{+}1) = x_{i{+}1}(n)$ and $s_i(n{+}1) = s_{i{+}1}(n)$ and hence property \eqref{strongest-invariant:hardness:simulation:eq1} follows from the induction hypothesis \eqref{strongest-invariant:hardness:simulation:step2:eq1}.

To prove property \eqref{strongest-invariant:hardness:simulation:eq2} for $(n{+}1,i)$ means to show that
\[
    s_i(n + 1) = \prod_{\ell = 0}^{n} x_0(\ell) \cdot \prod_{\ell = 0}^{i-1}  x_\ell(n + 1).
\]
We prove the equation by using the induction hypothesis \eqref{strongest-invariant:hardness:simulation:step2:eq2}, the definitions $x_i(n{+}1) = x_{i{+}1}(n)$ and $s_i(n{+}1) = s_{i{+}1}(n)$, and index manipulation:
\begin{align*}
    s_i(n+1) = s_{i+1}(n) &= \prod_{\ell = 0}^{n-1}  x_0(\ell)  \cdot \prod_{\ell = 0}^{i}  x_\ell(n) \\
    &= \prod_{\ell = 0}^{n-1} x_0(\ell) \cdot x_0(n) \cdot \prod_{\ell = 0}^{i-1} x_{\ell + 1}(n) \\
    &= \prod_{\ell = 0}^{n} x_0(\ell) \cdot \prod_{\ell = 0}^{i-1} x_{\ell}(n + 1)
\end{align*}
\noindent\emph{Induction step -- Case 2: $n > 0$ and $i = k{-}1$.}
We show that property \eqref{strongest-invariant:hardness:simulation:eq1} holds for $(n{+}1, k{-}1)$ by proving it to be equivalent to the definition of $x_{k-1}(n{+}1)$.
To do so, we first instantiate property \eqref{strongest-invariant:hardness:simulation:eq1} and replace both $s_{k-1}(n{+}1)$ and $u(n{+}k)$ by their defining recurrence:
\begin{align*}
    x_{k-1}(n+1) 
    &= s_{k-1}(n+1) \cdot u(n+k) \\
    &= s_{k-1}(n) \cdot x_{k-1}(n) \cdot \left( \sum_{i = 0}^{k-1} a_{i} \cdot u(n+i) \right)
\end{align*}
Next, we rearrange and apply the induction hypothesis \eqref{strongest-invariant:hardness:simulation:eq2} for $(n, k{-}1)$ and $(n, i)$ and obtain:
\begin{align*}
    x_{k-1}(n+1) 
    &= x_{k-1}(n) \cdot \left( \sum_{i = 0}^{k-1} a_{i} \cdot u(n+i) \cdot s_{k-1}(n) \right) \\
    &= x_{k-1}(n) \cdot \left( \sum_{i = 0}^{k-1} a_{i} \cdot u(n+i) \cdot \underbrace{\prod_{\ell = 0}^{n-1} x_0(\ell) \cdot \prod_{\ell = 0}^{k-2} x_\ell(n) }_{s_{k-1}(n) \text{ by I.H. \eqref{strongest-invariant:hardness:simulation:eq2}}} \right) \\
    &= x_{k-1}(n) \cdot \left( \sum_{i = 0}^{k-1} a_{i} \cdot u(n+i) \cdot \underbrace{ \prod_{\ell = 0}^{n-1}  x_0(\ell) \cdot \prod_{\ell = 0}^{i-1}  x_\ell(n)}_{=s_i(n) \text{ by I.H. \eqref{strongest-invariant:hardness:simulation:eq2}}} \cdot \prod_{\ell = i}^{k-2} x_\ell(n)  \right) \\
    &= x_{k-1}(n) \cdot \left( \sum_{i = 0}^{k-1} a_{i} \cdot u(n+i) \cdot s_i(n) \cdot \prod_{\ell = i}^{k-2} x_\ell(n) \right) \\
    &= \sum_{i = 0}^{k-1} a_{i} \cdot u(n+i) \cdot s_i(n) \cdot \prod_{\ell = i}^{k-1} x_\ell(n) 
\end{align*}
Now, we can apply the induction hypothesis \eqref{strongest-invariant:hardness:simulation:eq1} to replace $u(n{+}i) \cdot s_i(n)$ by $x_i(n)$ and arrive at the relation:
\[
    x_{k-1}(n+1) = \sum_{i = 0}^{k-1} a_{i} \cdot x_i(n) \cdot \prod_{\ell = i}^{k-1} x_\ell(n)
\]
However, this is exactly the defining recurrence equation from \eqref{strongest-invariant:hardness:simulation}.
Hence, property \eqref{strongest-invariant:hardness:simulation:eq2} necessarily holds for $(n, k{-}1)$.

To prove property \eqref{strongest-invariant:hardness:simulation:eq2} for $(n{+}1, k{-}1)$ we use the defining equation of $s_{k-1}(n{+}1)$ and the induction hypothesis for $(n, k{-}1)$:
\begin{align*}
    s_{k-1}(n+1) 
    &= s_{k-1}(n) \cdot x_{k-1}(n) 
    = x_{k-1}(n) \cdot \prod_{\ell = 0}^{n-1} x_0(\ell)  \cdot \prod_{\ell = 0}^{k-2} x_\ell(n)
    = \prod_{\ell = 0}^{n-1} x_0(\ell) \cdot \prod_{\ell = 0}^{k-1} x_\ell(n) \\
    &= \prod_{\ell = 0}^{n-1} x_0(\ell) \cdot x_0(n) \cdot \prod_{\ell = 0}^{k-2}  x_{\ell + 1}(n)
    = \prod_{\ell = 0}^{n} x_0(\ell) \cdot \prod_{\ell = 0}^{k-2} x_\ell(n+1)
\end{align*}
As we have covered all possible cases, we conclude the proof.
\end{proof}

Lemma~\ref{strongest-invariant:hardness:simulation-lemma} establishes two central properties of our reduction.
We now use these properties to show that \ptp is at least as hard as \skolem.

\begin{theorem}[Hardness of \ptp] \label{theorem:skolem}    
\ptp is \skolem-hard. That is, $\skolem \leq \ptp$.
\end{theorem}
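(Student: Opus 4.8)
The plan is to show that the reduction $\skolem \leq \ptp$ described above is correct, i.e., that the constructed \ptp-instance $(x_0, \ldots, x_{k-1})$ with target vector $\vec{0}$ is positive if and only if the original \lrs $u(n)$ has a zero. I would phrase this as the biconditional: there exists $m \in \N_0$ with $x_i(m) = 0$ for all $0 \leq i < k$ if and only if there exists $N \in \N_0$ with $u(N) = 0$. The entire argument rests on the two identities from Lemma~\ref{strongest-invariant:hardness:simulation-lemma}, namely $x_i(n) = s_i(n) \cdot u(n+i)$ and $s_i(n) = \prod_{\ell=0}^{n-1} x_0(\ell) \cdot \prod_{\ell=0}^{i-1} x_\ell(n)$, which I am free to invoke.

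For the forward direction ($\skolem$ positive $\Rightarrow$ \ptp positive), suppose $N$ is the least index with $u(N) = 0$. By the first identity of Lemma~\ref{strongest-invariant:hardness:simulation-lemma}, $x_0(N) = s_0(N) \cdot u(N) = 0$. Now I would argue that once $x_0$ hits $0$, the whole state collapses: since $x_0(N) = 0$, the factor $\prod_{\ell=0}^{n-1} x_0(\ell)$ appearing in $s_i(n)$ vanishes for every $n > N$, hence $s_i(n) = 0$, hence $x_i(n) = s_i(n) \cdot u(n+i) = 0$ for all $0 \leq i < k$ and all $n > N$. (More concretely, one can also track it stepwise: $x_i(N+1) = x_{i+1}(N)$ for $i < k-1$, and $x_{k-1}(N+1) = \sum_i a_i x_i(N) \prod_{\ell=i}^{k-1} x_\ell(N)$, where the product always contains the factor $x_{k-1}(N) = s_{k-1}(N) u(N+k-1)$ — wait, this needs $x_{k-1}(N)=0$, which follows because $s_{k-1}(N)$ contains $x_0(\ell)$ factors only up to $\ell = N-1$, so instead the clean route is the first argument via $s_i(n)$ for $n > N$.) Taking $m = N + k$ (or any $m > N$) gives $x_i(m) = 0$ for all $i$, so the \ptp-instance is positive.

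For the backward direction (\ptp positive $\Rightarrow$ $\skolem$ positive), suppose $m$ is such that $x_i(m) = 0$ for all $0 \leq i < k$; in particular $x_0(m) = 0$. Using the first identity, $0 = x_0(m) = s_0(m) \cdot u(m) = \left(\prod_{\ell=0}^{m-1} x_0(\ell)\right) \cdot u(m)$. This is a product of real numbers that equals zero, so either $u(m) = 0$ — in which case we are done — or $x_0(\ell) = 0$ for some $0 \leq \ell < m$. In the latter case I would take the least such $\ell$ and repeat: $0 = x_0(\ell) = \left(\prod_{j=0}^{\ell-1} x_0(j)\right) u(\ell)$, and by minimality of $\ell$ none of the earlier factors vanish, forcing $u(\ell) = 0$. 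Either way, $u(n)$ has a zero, so the \skolem instance is positive. Finally, I would note that the reduction is effective: the polynomials $p_1, \ldots, p_{k}$ defining the system \eqref{strongest-invariant:hardness:simulation} and the initial values are computable from the coefficients $a_i$ and initial values $u(0), \ldots, u(k-1)$, so this is a genuine many-one reduction.

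The main obstacle — and the place to be careful — is not in the high-level logic but in justifying the collapse-to-zero behavior cleanly without circularity: the identity $s_i(n) = \prod_{\ell=0}^{n-1} x_0(\ell) \cdot \prod_{\ell=0}^{i-1} x_\ell(n)$ is exactly what makes both directions work, but one must be careful that in the forward direction the relevant vanishing factor $\prod_{\ell=0}^{n-1} x_0(\ell)$ only picks up the zero $x_0(N)$ once $n - 1 \geq N$, i.e., $n \geq N+1$; hence one should choose $m \geq N+1$ rather than $m = N$, and it is cleanest to simply take $m = N + k$. I expect the proof to be short given Lemma~\ref{strongest-invariant:hardness:simulation-lemma}, with essentially all the real work already done in that lemma.
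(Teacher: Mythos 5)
Your proposal is correct and, in substance, matches the paper's proof: both directions hinge entirely on the two identities of Lemma~\ref{strongest-invariant:hardness:simulation-lemma}, and the forward direction (use $x_0(N)=s_0(N)u(N)=0$, then observe that $\prod_{\ell=0}^{n-1}x_0(\ell)$ kills every $s_i(n)$, hence every $x_i(n)$, for $n>N$) is essentially the same argument; any $m>N$ works, so the hedge about $m=N+k$ is unnecessary but harmless. The one place you diverge is the backward direction: the paper argues by contrapositive and contradiction, assuming the \skolem{} instance is negative, letting $N$ be the least index where the whole state vector is zero, and then arguing (somewhat indirectly, via the shift structure $x_i(n{+}1)=x_{i+1}(n)$) that $N$ is also the least zero of $x_0$, so $s_0(N)=\prod_{\ell<N}x_0(\ell)\neq 0$ forces $u(N)=0$, a contradiction. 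You instead argue directly: from $x_0(m)=0$, pick the least $\ell\leq m$ with $x_0(\ell)=0$, and then $s_0(\ell)=\prod_{j<\ell}x_0(j)\neq 0$ by minimality forces $u(\ell)=0$. This is logically equivalent and slightly cleaner — it uses only the least zero of $x_0$ and sidesteps the paper's auxiliary observation that $x_0$ is the last coordinate to vanish — but it is the same key insight (the first zero of $x_0$ must be a zero of $u$ because the accumulated product $s_0$ cannot have vanished yet), so I would not call it a genuinely different route.
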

\begin{proof}
We show that our polynomial recursive system constructed in \eqref{strongest-invariant:hardness:simulation} reaches the all-zero vector from the initial value if and only if the original \skolem instance is positive.

$(\Rightarrow):$ Assume the \skolem instance is positive, then there is some smallest $N \in \N_0$ such that $u(N) = 0$.
Property \eqref{strongest-invariant:hardness:simulation:eq1} of Lemma~\ref{strongest-invariant:hardness:simulation-lemma} implies
\[
    x_0(N) = s_0(N) \cdot u(N) = 0.
\]
Using this equation and property \eqref{strongest-invariant:hardness:simulation:eq2} of Lemma~\ref{strongest-invariant:hardness:simulation-lemma}, we deduce that for all $n > N$, each $s_i(n)$ contains $x_0(N)$ as a factor and hence $s_i(n) = 0$. 
Additionally, as $x_i(n) = s_i(n) \cdot u(n{+}i)$ by property \eqref{strongest-invariant:hardness:simulation:eq1}, we conclude that for all $n > N$ also $x_i(n) = 0$.
Hence, the polynomial recursive system reaches the all-zero vector.

$(\Leftarrow)$ Assume that the \skolem instance is negative, meaning that the linear recurrence sequence $u(n)$ does not have a $0$.
In particular, $u(i) \neq 0$ for all $0 \leq i < k$.
Therefore, by definition of the polynomial recursive system \eqref{strongest-invariant:hardness:simulation}, $x_i(0) \neq 0$ for all $0 \leq i < k$.
Towards a contradiction, assume that the polynomial recursive system still reaches the all-zero vector.
Hence, there is a smallest $N \in \N_0$ such that $x_i(N) = 0$ for all $0 \leq i < k$.
In particular, $x_0(N) = 0$.
Moreover, $x_0$ is the last sequence to reach $0$, because of the recurrence equation $x_i(n{+}1) = x_{i+1}(n)$ for $0 \leq i < k$.
Therefore, $N$ is also the smallest number such that $x_0(N) = 0$.
By property \eqref{strongest-invariant:hardness:simulation:eq1} of Lemma~\ref{strongest-invariant:hardness:simulation-lemma}, we have 
\[
    x_0(N) = s_0(N) \cdot u(N) = 0.
\]
However, $s_0(N)$ must be non-zero, because
\[
    s_0(N) = \prod_{\ell = 0}^{N-1} x_0(\ell),
\]
by property \eqref{strongest-invariant:hardness:simulation:eq2} of Lemma~\ref{strongest-invariant:hardness:simulation-lemma}, and the fact that $N$ is the smallest number such that $x_0(N) = 0$. Then we necessarily have $u(N) = 0$, yielding a contradiction.
\end{proof} 

Theorem~\ref{theorem:skolem} shows that \ptp for polynomial recursive sequences is at least as hard as the \skolem problem. 
Thus, reachability and model-checking of loops with polynomial assignments is \skolem-hard. 
A decision procedure establishing decidability for \ptp would lead to a major breakthrough in number theory, as by Theorem~\ref{theorem:skolem} this would imply the decidability of the \skolem problem.

\begin{remark}
In \cite{Ouaknine23} the authors show that the the strongest polynomial invariant is uncomputable for polynomial programs \emph{with nondeterminism}.
The proof reduces from an undecidable problem to finding the strongest polynomial invariant for nondeterministic polynomial programs.
A similarity between our reduction from this section and the reduction in \cite{Ouaknine23} is the idea of projecting specific states to the zero vector.
Nevertheless, the setting and reasons for using such a projection differ significantly between the two reductions.
The reduction in \cite{Ouaknine23} maps invalid program traces to the zero state to argue about the dimension of an algebraic set. 
In contrast, our work maps the single program trace to the zero state if and only if the original \skolem instance is positive.
\end{remark}

\section{Hardness of Computing the Strongest Polynomial Invariant}\label{sec:ptp-spinv}

This section goes beyond reachability analysis and focuses on inferring the strongest polynomial invariants of polynomial loops. As such, we turn our attention to solving the \spinv problem of Section~\ref{sec:intro}, which is formally defined as given below.

\problembox{The \spinv Problem: Given an unguarded, deterministic loop with polynomial updates, compute a basis of its polynomial invariant ideal.}

We prove that finding the strongest polynomial invariant for deterministic loops with polynomial updates, that is, solving \spinv, is at least as hard as \ptp (Theorem~\ref{theorem:ptp-spinv}). Hence, $\ptp \leq \spinv$.

Then, by the $\skolem \leq \ptp$ hardness result of Theorem~\ref{theorem:skolem}, we conclude the \skolem-hardness of \spinv, that is $\skolem \leq \ptp\leq \spinv$.
To the best of our knowledge, our Theorem~\ref{theorem:skolem} together with Theorem~\ref{theorem:ptp-spinv} provide the first computational lower bound on \spinv, when focusing on loops with arbitrary polynomial updates (see Table~\ref{strongest-invariant:table:overview-invariants}).

\paragraph{Our reduction for $\ptp\leq\spinv$.}
We fix an arbitrary \ptp instance of order $k$, given by a system of polynomial recursive sequences $u_1, \ldots, u_k: \N_0 \to \Q$ and a target vector $\vec{t} = (t_1, \ldots, t_k) \in \Q^k$.
This \ptp instance is positive if and only if there exists an $N \in \N_0$ such that $(u_1(N), \ldots, u_k(N)) = \vec{t}$.
For reducing \ptp to \spinv, we construct the following deterministic loop with polynomial updates over $k{+}2$ variables:

\begin{instance}\label{spinv-instance}
\begin{algorithmic}
\State{$
    \begin{bmatrix}
        f~ & g~ & x_1~ & \ldots ~ & x_k
    \end{bmatrix}
    \gets
    \begin{bmatrix}
        1~ & 0~ & u_1(0)~ & \ldots ~ & u_k(0)
    \end{bmatrix}
    $
}
\While{$\star$}
    \vspace*{1mm}
    \State{$
        \begin{bmatrix}
            x_1 \\ \vdots \\ x_k \\ f \\ g
        \end{bmatrix}
        \gets
        \begin{bmatrix}
            p_1(x_1, \ldots, x_k) \\ \vdots \\ p_k(x_1, \ldots, x_k) \\ f \cdot \left( (x_1 - t_1)^2 + \ldots + (x_k - t_k)^2 \right) \\ g + 1
        \end{bmatrix}
        $
    }
    \vspace*{1mm}
\EndWhile
\end{algorithmic}
\end{instance}

The polynomial recursive sequences $u_1, \ldots, u_k$ are fully determined by their initial values and the polynomials $p_1, \ldots, p_k \in \Q[u_1, \ldots, u_k]$ defining the respective recurrence equations $u_i(n{+}1) = p_i(u_1(n), \ldots, u_k(n))$.
Hence, by the construction of the \spinv instance \eqref{spinv-instance}, every program variable $x_i$ models the sequence $u_i$.
As such, for any number of loop iterations $n \in \N_0$, we have $x_i(n) = u_i(n)$.
Moreover, the variable $g$ models the loop counter $n$, meaning $g(n) = n$ for all $n \in \N_0$.
The motivation behind using the program variable $f$ is that $f$ becomes $0$ as soon as all sequences $u_i$ reach their target $t_i$; moreover, $f$ remains $0$ afterward.
More precisely, for $n \in \N_0$, $f(n) = 0$ if and only if there is some $N \leq n$ such that $x_1(N) = t_1 \land \ldots \land x_k(N) = t_k$.
Hence, the sequence $f$ has a $0$ value, and subsequently, all its values are $0$, if and only if the original instance of \ptp is positive.

Let us illustrate the main idea of our $\ptp\leq \spinv$ reduction via the following example.
\begin{example}
Consider the recursive sequences $x(n{+}1) = x(n) + 2$ and $y(n{+}1) = y(n) + 3$, with initial values $x(0) = y(0) = 0$.
It is easy to see that the system $S = (x(n), y(n))$ reaches the target $\vec{t_1} = (4, 6)$ but does not reach the target $\vec{t_2} = (5, 7)$.
Following are the two \spinv instances produced by our reduction for the \ptp instances $(S, \vec{t_1})$ and $(S, \vec{t_2})$.

\medskip

\noindent
\begin{minipage}{\linewidth}
\begin{minipage}{0.45\linewidth}
\textbf{\spinv instance for $\mathbf{(S, \vec{t_1})}$:}
\begin{algorithmic}
\State{$
    \begin{bmatrix}
        f~ & g~ & x~ & y
    \end{bmatrix}
    \gets
    \begin{bmatrix}
        1~ & 0~ & 0~ & 0
    \end{bmatrix}
    $
}
\While{$\star$}
    \vspace*{1mm}
    \State{$
        \begin{bmatrix}
            x \\ y \\ f \\ g
        \end{bmatrix}
        \gets
        \begin{bmatrix}
            x + 2 \\ y+3 \\ f \cdot \left( (x - \mathbf{4})^2 + (y - \mathbf{6})^2 \right) \\ g + 1
        \end{bmatrix}
        $
    }
    \vspace*{1mm}
\EndWhile
\end{algorithmic}
\problembox{
Invariant ideal:
$\langle x - 2g, y - 3g, g(g-1)f \rangle$
}
\end{minipage}\hfill
\begin{minipage}{0.45\linewidth}
\textbf{\spinv instance for $\mathbf{(S, \vec{t_2})}$:}
\begin{algorithmic}
\State{$
    \begin{bmatrix}
        f~ & g~ & x~ & y
    \end{bmatrix}
    \gets
    \begin{bmatrix}
        1~ & 0~ & 0~ & 0
    \end{bmatrix}
    $
}
\While{$\star$}
    \vspace*{1mm}
    \State{$
        \begin{bmatrix}
            x \\ y \\ f \\ g
        \end{bmatrix}
        \gets
        \begin{bmatrix}
            x + 2 \\ y+3 \\ f \cdot \left( (x - \mathbf{5})^2 + (y - \mathbf{7})^2 \right) \\ g + 1
        \end{bmatrix}
        $
    }
    \vspace*{1mm}
\EndWhile
\end{algorithmic}
\problembox{
Invariant ideal:
$\langle x - 2g, y - 3g \rangle$
}
\end{minipage}
\end{minipage}

\noindent
The invariant ideals for both instances are given in terms of Gröbner bases with respect to the lexicographic order for the variable order $g < f < y < x$.

For the instance with the reachable target $\vec{t_1}$, we have $f(n) = 0$ for $n \geq 2$.
Hence, $g(g-1)f$ is a polynomial invariant and must be in the invariant ideal of this \spinv instance; in fact, $g(g-1)f$ is not only in the invariant ideal but even a basis element for the Gröbner basis with the chosen order.
However, $g(g-1)f$ is not in the ideal of the \spinv instance with the unreachable target $\vec{t_2}$. These two \spinv instances illustrate thus how a basis of the invariant ideal can be used to decide \ptp.

While, for simplicity, our recursive sequences $x(n)$ and $y(n)$ are linear, our approach to reducing \ptp to \spinv also applies to polynomial recursive sequences.
In Theorem~\ref{theorem:ptp-spinv}, we show that a polynomial such as $g(g-1)f$ is an element of the basis of the invariant ideal (with respect to a specific monomial order) if and only if the original \ptp instance is positive.
\end{example}

\paragraph{Correctness of $\ptp\leq \spinv$.}
To show that it is decidable whether $f(n)$ has a $0$ given a basis of the invariant ideal, we employ Gröbner bases and an argument introduced in \cite{kauers-thesis} for recursive sequences defined by rational functions, adjusted to our setting using recursive sequences defined by polynomials.

\begin{theorem}[Hardness of \spinv] \label{theorem:ptp-spinv}
    \spinv is at least as hard as \ptp. That is, \ptp $\leq$ \spinv.
\end{theorem}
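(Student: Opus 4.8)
The plan is to show that a basis of the invariant ideal of the loop in~\eqref{spinv-instance} lets us decide the underlying \ptp instance, so that any algorithm for \spinv immediately yields one for \ptp. First I would record the semantic behaviour of the constructed loop by a routine induction on the iteration count $n$: the variables satisfy $x_i(n)=u_i(n)$ and $g(n)=n$, and $f(n)=0$ exactly when some earlier iterate of $(x_1,\dots,x_k)$ equals the target $\vec t$ (once $f$ hits $0$ it stays $0$, since $f(n{+}1)$ is a multiple of $f(n)$). Consequently the \ptp instance is positive if and only if $f(n)=0$ for some $n$, equivalently $f(n)=0$ for all sufficiently large $n$. Throughout, I write $\mathcal{I}$ for the invariant ideal of the loop; since a polynomial lies in $\mathcal{I}$ iff it vanishes on every state reachable at the looping location, and those states are exactly $(f(n),g(n),x_1(n),\dots,x_k(n))$ for $n\in\N$, $\mathcal{I}$ is precisely the set of polynomial relations holding along the run.

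The key step is to translate the temporal statement ``$f$ is eventually $0$'' into an ideal-membership statement about $\mathcal{I}$. The claim I would prove is: \emph{the \ptp instance is positive iff there is a nonzero univariate polynomial $q\in\Qbar[g]$ with $q(g)\cdot f\in\mathcal{I}$.} For the forward direction, if $f(n)=0$ for all $n\ge N$, then $q(g):=\prod_{j=0}^{N-1}(g-j)$ works, because $q(n)\,f(n)=0$ for every $n$ (the product kills $n<N$, the sequence $f$ kills $n\ge N$), so $q(g)f$ vanishes on all reachable states and hence lies in $\mathcal{I}$. For the converse, if $q(g)f\in\mathcal{I}$ with $q\neq 0$, then $q(n)\,f(n)=0$ for all $n\in\N$; since $g(n)=n$ and $q$ has only finitely many roots, $f(n)=0$ for cofinitely many $n$, in particular for at least one $n$, so the instance is positive. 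Equivalently, for a negative instance $f(n)\neq 0$ for all $n$, which forces $q$ to vanish on all of $\N$ and hence $q=0$.

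It remains to argue that the condition ``$\exists$ nonzero $q\in\Qbar[g]$ with $q(g)f\in\mathcal{I}$'' is decidable from a basis of $\mathcal{I}$, using the machinery recalled in Section~\ref{sec:preliminaries:alg-geom}. From whatever basis \spinv returns one first computes a Gröbner basis of $\mathcal{I}$. The ideal quotient $(\mathcal{I}:f)=\{\,p : pf\in\mathcal{I}\,\}$ is then computable by a standard Gröbner-basis construction (via an auxiliary variable, to form $\mathcal{I}\cap\langle f\rangle$ and divide by $f$), and the set of admissible $q$ is exactly the elimination ideal $(\mathcal{I}:f)\cap\Qbar[g]$, itself computable with a lexicographic Gröbner basis eliminating $f,x_1,\dots,x_k$. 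This is an ideal of the univariate ring $\Qbar[g]$, hence principal; it is nonzero iff its monic generator is nonzero, which is read off directly. It cannot be all of $\Qbar[g]$, since that would mean $f\in\mathcal{I}$, impossible because $f(0)=1$. The decision procedure for \ptp is therefore: build the loop~\eqref{spinv-instance}, call \spinv, compute the quotient and the elimination ideal, and answer ``positive'' iff the resulting univariate ideal is nontrivial.

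The step I expect to be the main obstacle is the algebraic characterization in the second paragraph: pinning down cleanly that the dynamical property ``$f$ eventually vanishes'' coincides with the static property ``$\mathcal{I}$ contains a nonzero $g$-polynomial multiple of $f$''. The forward direction hinges on exhibiting the explicit witness $\prod_{j<N}(g-j)\cdot f$ and on the fact that $g$ faithfully counts iterations, while the other direction relies on a nonzero univariate polynomial not vanishing on all of $\N$. Once this equivalence is secured, turning it into an algorithm is a routine application of ideal quotients and elimination; the only further care needed is to note that \spinv delivers \emph{some} basis, from which a Gröbner basis --- and hence every operation used above --- is effectively obtainable. Combined with Theorem~\ref{theorem:skolem}, this gives the chain $\skolem\leq\ptp\leq\spinv$.
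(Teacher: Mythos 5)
Your reduction is the same one the paper uses — you construct exactly the loop~\eqref{spinv-instance}, note that $g$ counts iterations and $f$ becomes and stays zero once the target is hit, and reduce \ptp to a membership question about the invariant ideal $\mathcal{I}$. Where you diverge is in \emph{how} you decide that membership question from a basis of $\mathcal{I}$. The paper fixes a lexicographic Gröbner basis for the variable order $g < f < x_1, \ldots, x_k$ and then argues structurally: if some falling-factorial multiple $f\cdot g(g{-}1)\cdots(g{-}N{+}1)$ lies in $\mathcal{I}$, then a basis element must have the shape $Q_1(g)f - Q_2(g)$, a degree/vanishing argument forces $Q_2\equiv 0$, and a minimality argument pins $Q_1$ to the falling factorial; the decision procedure is then to inspect the finitely many basis polynomials. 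You instead phrase the criterion term-order independently as ``$(\mathcal{I}:f)\cap\Qbar[g]$ is nonzero,'' and both directions of that equivalence are one-liners: the explicit witness $\prod_{j<N}(g{-}j)$ for the forward direction, and the fact that a nonzero univariate polynomial has finitely many integer roots for the converse. Computability is then completely standard (ideal quotient via $\mathcal{I}\cap\langle f\rangle$, followed by elimination; the result is a principal ideal of $\Qbar[g]$, so check whether its generator is nonzero). Your route is arguably cleaner and more modular — it avoids committing to a term order and avoids the analysis of leading-term shapes — at the cost of invoking the ideal-quotient algorithm, which the paper does not explicitly recall in its preliminaries. Both arguments are correct, and your small sanity check that $\Qbar[g]\not\subseteq(\mathcal{I}:f)$ (since $f(0)=1$ means $f\notin\mathcal{I}$) is a nice touch, though not strictly needed for the yes/no answer.
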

\begin{proof}
Assume we are given an oracle for \spinv, computing a basis $B$ of the polynomial invariant ideal $\mathcal{I} = \langle B \rangle$ of our loop~\eqref{spinv-instance}.
We show that given such a basis $B$, it is decidable whether $f(n)$ has a root, which is equivalent to the fixed \ptp instance being positive.

Note that by the construction of the loop~\eqref{spinv-instance}, if $f(N) = 0$ for some $N \in \N_0$, then $\forall n \geq N: f(n) = 0$.
Moreover, such an $N$ exists if and only if the \ptp instance is positive.
This is true if and only if there exists an $N \in \N_0$ such that the sequence
\begin{equation*}
    n \mapsto f(n) \cdot n \cdot (n-1) \cdot (n-2) \cdot \ldots \cdot (n - N + 1)
\end{equation*}
is $0$ for all $n \in \N_0$.
Consequently, the polynomial invariant ideal $\mathcal{I}$ contains a polynomial 
\begin{equation} \label{eq:ptp-spinv:grobner-zero-poly}
    P \coloneqq f \cdot g \cdot (g - 1) \ldots \cdot (g - N + 1) 
\end{equation}
for some $N \in \N_0$ only if the \ptp instance~\eqref{spinv-instance} is positive.
It is left to show that, given a basis $B$ of $\mathcal{I}$, it is decidable whether $\mathcal{I}$ contains a polynomial~\eqref{eq:ptp-spinv:grobner-zero-poly}.
Using Buchberger's algorithm \cite{Buchberger-thesis}, $B$ can be transformed into a Gröbner basis with respect to any monomial order.
We choose a total order among program variables such that $g < f < x_1, \ldots, x_k$.
Without loss of generality, we assume that $B$ is a Gröbner basis with respect to the lexicographic order extending the variable order.

In what follows, we argue that if a polynomial $P$ as in~\eqref{eq:ptp-spinv:grobner-zero-poly} is an element of $\mathcal{I}$, then $P$ must be an element of the basis $B$.
As the leading term of $P$ is $g^N \cdot f$, there must be some polynomial $Q$ in $B$ with a leading term that divides $g^N \cdot f$.
By the choice of the lexicographic order, this polynomial must be of the form $Q = Q_1(g) \cdot f - Q_2(g)$, since if any other term would occur in $Q$, it would necessarily be in the leading term.
As both $P \in \mathcal{I}$ and $Q \in \mathcal{I}$, it holds that 
\[
    P \cdot Q_1 - g \cdot (g - 1) \ldots \cdot (g - N + 1) \cdot Q \in \mathcal{I}.
\]
By expanding $P$ and $Q$, we see that the above  polynomial is equivalent to 
\[
    Q_2 \cdot g \cdot (g - 1) \ldots \cdot (g - N + 1).
\]
As this polynomial is in the ideal $\mathcal{I}$, it follows that for all $n \in \N_0$:
\[
    Q_2(n) \cdot n \cdot (n - 1) \ldots \cdot (n - N + 1) = 0.
\]
However, this implies that $Q_2(n)$ has infinitely many zeros, a property that is unique to the zero polynomial.
Therefore, we conclude that $Q_2 \equiv 0$.
Hence, if the original \ptp instance is positive, there necessarily exists a basis polynomial of the form $Q_1(g) \cdot f$.

We show that this basis polynomial $Q_1(g) \cdot f$ actually has the form \eqref{eq:ptp-spinv:grobner-zero-poly}:
choose the basis polynomial of the form $Q_1(g) \cdot f$ such that $Q_1$ has minimal degree.
Assume $Q_1(g)$ is not of the form $g \cdot (g {-} 1) \ldots \cdot (g {-} N {+} 1)$.
Then, at least one factor $(g {-} m)$ is not a factor of $Q_1$, or equivalently $Q_1(m) \neq 0$.
Then, necessarily $f(m) = 0$ and $g \cdot (g {-} 1) \cdot \ldots \cdot (g {-} m {+} 1) \cdot f$ must be in the ideal $\mathcal{I}$, contradicting the minimality of the degree of $Q_1$.

Therefore, we conclude that the \ptp instance is positive if and only if the Gröbner basis contains a polynomial of the form \eqref{eq:ptp-spinv:grobner-zero-poly}.
As the basis $B$ is finite, this property can be checked by enumeration of the basis elements of $B$.
Hence, given an oracle for \spinv, we can decide if the \ptp instance is positive or negative.
\end{proof}

Theorem~\ref{theorem:ptp-spinv} shows that \spinv is at least as hard as the \ptp problem. Together with Theorem~\ref{theorem:skolem}, we conclude that \spinv is \skolem-hard. 

\paragraph{An improved direct reduction from \skolem to \spinv.}
Theorem~\ref{theorem:ptp-spinv} together with Theorem~\ref{theorem:skolem} yields the chain of reductions $$\skolem \leq \ptp\leq \spinv.$$
Within these reductions, a \skolem instance of order $k$ yields a \ptp instance with $k$ sequences, which in turn reduces to a \spinv instance over $k{+}2$ variables.

We conclude this section by noting that, if the linear recurrence sequence of the \skolem-instance is an \emph{integer sequence}, then a reduction directly from \skolem to \spinv can be established by using only $k{+}1$ variables.
A slight modification of $\skolem \leq \ptp$ reduction of Section~\ref{sec:skolem-ptp} results in a reduction from \skolem instances of order $k$ directly to \spinv instances with $k{+}1$ variables.
Any system of polynomial recursive sequences can be encoded in a loop with polynomial updates.
Hence, the instance produced by the $\skolem \leq \ptp$ reduction can be interpreted as a loop.
It is sufficient to modify the resulting loop in the following way:
\begin{equation*}
    \boxed{
        \begin{aligned}
            x_{k-1} & \gets \sum_{i = 0}^{k-1} a_{i} \cdot x_i \cdot \prod_{\ell=i}^{k-1} x_\ell \\
            s_{k-1} & \gets x_{k-1} \cdot s_{k-1}
        \end{aligned}
    } 
    \quad \to \quad 
    \boxed {
        \begin{aligned}
            x_{k-1} & \gets \sum_{i = 0}^{k-1} a_{i} \cdot x_i \cdot \prod_{\ell=i}^{k-1} \mathbf{2} \cdot x_\ell \\
            s_{k-1} & \gets \mathbf{2} \cdot x_{k-1} \cdot s_{k-1}
        \end{aligned}
    }
\end{equation*}

As in the reduction in Section~\ref{sec:skolem-ptp}, the equation $u_0(n) = \frac{x_0(n)}{s_0(n)}$ still holds and the resulting loop reaches the all-zero configuration if and only if the original \skolem-instance is positive (the integer sequence has a $0$).
Additionally, the resulting loop has infinitely many \emph{different} configurations if and only if the \skolem instance is positive, as the additional factor in the updates forces a strict increase in $\lvert s_{k-1} \rvert$.
Assuming a solution to \spinv for the constructed loop, that is a basis of the polynomial invariant ideal, it is decidable whether the number of reachable program locations (and its algebraic closure) is finite or not \cite{CoxLittleOshea97}.
Therefore, an oracle for \spinv implies the decidability of \skolem for \emph{integer sequences}, while the chain of reductions $\skolem \leq \ptp \leq \spinv$ is also valid for rational sequences.
For more details, we refer to \cite{julians-thesis}.

\paragraph{Summary of computability results in polynomial (non)determinstic loops.}
We conclude this section by overviewing our computability results in Table~\ref{strongest-invariant:table:overview-invariants}, focusing on the strongest polynomial invariants of (non)deterministic loops and in relation to the state-of-the-art.

\begin{table}[htb]
    \setlength{\tabcolsep}{0.5em}
    \centering
    \resizebox{\textwidth}{!}{%
    \begin{tabular}{|l|l|l||cl|cl|} \hline
        \multicolumn{3}{|l||}{Program Model} & \multicolumn{2}{c|}{Strongest Affine Invariant} & \multicolumn{2}{c|}{Strongest Polynomial Invariant} \\ \hline
        \multirow{4}*{Det.} & \multirow{2}*{Unguarded} & Affine & \checkmark & \cite{Karr76} & \checkmark & \cite{Kovacs08} \\ \hhline{~~|*5-} %
        & & Poly. & \checkmark & \cite{Muller-OlmS04-2} & \skolem-hard &  Theorems \ref{theorem:skolem} \& \ref{theorem:ptp-spinv} \\ \cline{2-7}
        & \multirow{2}*{Guarded ($=, <$)} & Affine & \multicolumn{4}{c|}{\multirow{2}{*}{\xmark \hspace{0.2cm} (Halting Problem)}} \\ \cline{3-3}
        & & Poly. & \multicolumn{4}{c|}{} \\ \hline
        \multirow{4}*{Nondet.} & \multirow{2}*{Unguarded} & Affine & \checkmark & \cite{Karr76} & \checkmark & \cite{Ouaknine23} \\ \cline{3-7}
        & & Poly. & \checkmark & \cite{Muller-OlmS04-2} & \xmark & \cite{Ouaknine23} \\ \cline{2-7}
        & \multirow{2}*{Guarded ($=, <$)} & Affine & \multicolumn{4}{c|}{\multirow{2}{*}{\xmark \hspace{0.2cm} \cite{Muller-OlmS04}}} \\ \cline{3-3}
        & & Poly. & \multicolumn{4}{c|}{} \\ \hline
    \end{tabular}}
    \medskip
    \caption{Summary of computability results for strongest invariants of \emph{nonprobabilistic} polynomial loops, including our own results (Theorems~\ref{theorem:skolem} \& \ref{theorem:ptp-spinv}). With '\checkmark' we denote decidable problems, while '\xmark' denotes undecidable problems.}
    \label{strongest-invariant:table:overview-invariants}
\end{table}
\section{Strongest Invariant for Probabilistic Loops}\label{sec:prob-invariants}

In this section, we finally go beyond (non-)deterministic programs and address computational challenges in probabilistic programming, in particular loops. Unlike the programming models of Section~\ref{sec:skolem-ptp}--\ref{sec:ptp-spinv}, probabilistic loops follow different transitions with different probabilities (cf. Example~\ref{ex:prob-loop}).

Recall that the standard definition of an invariant $I$, as given in Definition~\ref{def:inv}, demands that $I$ holds in \emph{every reachable} configuration and location.
As such, when using Definition~\ref{def:inv} to define an invariant $I$ of a probabilistic loop, the information provided by the probabilities of reaching a configuration within the respective loop is omitted in $I$. However, Definition~\ref{def:inv} captures an invariant $I$ of a probabilistic loop when every probabilistic loop transition is replaced by a nondeterministic transition.
 
Nevertheless, for incorporating probability-based information in loop invariants, Definition~\ref{def:inv} needs to be revised to consider expected values and higher (statistical) moments describing the value distributions of probabilistic loop variables~\cite{Kozen83,McIver05}.
For instance, the symmetric 1-dimensional random walk from Example~\ref{ex:prob-loop} does not have any non-trivial polynomial invariants.
However, considering expected values of program variables, $\E[x] = 0$ is an invariant property of Example~\ref{ex:prob-loop}.
Therefore, in Definition~\ref{def:mom:PInv} we introduce \emph{polynomial moment invariants} to reason about value distributions of probabilistic loops.
We do so by utilizing higher moments of the probability distributions induced by the value distributions of loop variables during the execution (Section~\ref{sec:spinv:MomInv}).
The notion of polynomial moment invariants is the main contribution of this section as it allows us to transfer specific (un)computability results for classical invariants to the probabilistic case.
We prove that polynomial moment invariants generalize classical invariants (Lemma~\ref{lemma:moment-inv-generalization}) and show that the strongest moment invariants up to moment order $\ell$ are computable for the class of so-called moment-computable polynomial loops (Section~\ref{sec:spinf:comp}).
In this respect, in Algorithm~\ref{alg:moment-inv-ideal} we give a complete procedure for computing the strongest moment invariants of moment-computable polynomial loops. When considering \emph{arbitrary} polynomial probabilistic loops, we prove that the strongest moment invariants are (i) not computable for guarded probabilistic loops (Section~\ref{sec:spinv:guarded}) and (ii) \skolem-hard to compute for unguarded probabilistic loops (Section~\ref{sec:spinv:unguarded}).

\subsection{Polynomial Moment Invariants}\label{sec:spinv:MomInv}
Higher moments capture expected values of monomials over loop variables, for example, $\E[x^2]$ and $\E[xy]$ respectively yield the second-order moment of $x$ and a second-order mixed moment. Such higher moments are necessary to characterize, and potentially recover, the value distribution of probabilistic loop variables, allowing us to reason about statistical properties, such as variance or skewness, over probabilistic value distributions. 

When reasoning about moments of probabilistic program variables, note that in general neither $\E[x^\ell] = \E[x]^\ell$ nor $\E[xy] = \E[x] \E[y]$ hold, due to potential dependencies among the (random) loop variables $x$ and $y$.
Therefore, describing all polynomial invariants among all higher moments by finitely many polynomials is futile.
A natural restriction and the one we undertake in this paper is to consider polynomials over finitely many moments, which we do as follows. 

\begin{definition}[Moments of Bounded Degree]\label{def:bounded-moments}
Let $\ell$ be a positive integer. Then the \emph{set of program variable moments of order at most $\ell$} is given by 
\[
    \E^{\leq \ell} \coloneqq \left \{ \E \left[ x_1^{\alpha_1} x_2^{\alpha_2} \cdots x_k^{\alpha_k} \right] \mid \alpha_1 + \ldots + \alpha_k \leq \ell \right \}.
\]
\end{definition}

Classical invariants are defined over a finite set of program variables.
In the probabilistic setting, the elements of $\E^{\leq \ell}$ serve as the formal variables over which moment invariants are defined.
As such, bounding the degrees of the moments is different from bounding the degrees of the invariants, which is a common technique for classical programs \cite{Muller-OlmS04}.
Although the moments in $\E^{\leq \ell}$ are bounded, in this section, we study unbounded polynomial invariants involving these moments.
While Definition~\ref{def:bounded-moments} uses a bound $\ell$ to define the set of moments of bounded degree, our subsequent results apply to any \emph{finite} set of moments of program variables.

Recall that Section~\ref{sec:preliminaries:program-models} defines the semantics $\Scal_q^n$ of a probabilistic loop with respect to the location $q \in Q$ and the number of executed transitions $n \geq 0$.
The set $\Scal_q^n$ in combination with the probability of each configuration allows us to define the moments of program variables after $n$ transitions.
Further, for a monomial $M$ in program variables, we defined $\E[M_n]$ in~\eqref{eq:EMn} to be the expected value of $M$ after $n$ transitions.
For example, $\E[x_n]$ denotes the expected value of the program variable $x$ after $n$ transitions.
With this, we define the set of polynomial invariants among moments of program variables, as follows.

\begin{definition}[Moment Invariant Ideal]\label{def:moment-ideal}\label{def:mom:PInv} 
Let $\E^{\leq \ell} = \{ \E[M^{(1)}], \ldots, \E[M^{(k)}] \}$ be the set of program variable moments of order less than or equal to $\ell$.
The \emph{moment invariant ideal} $\mathbb{I}^{\leq \ell}$ is defined as
\[
    \mathbb{I}^{\leq \ell} = \left \{ p \left(\E[M^{(1)}], \ldots, \E[M^{(k)}] \right) \in \Qbar \left[ \E^{\leq \ell} \right] \mid p \left( \E[M^{(1)}_n], \ldots, \E[M^{(k)}_n] \right) = 0 \text{ for all } n \in \N_0 \right\}. 
\]
We refer to elements of $\mathbb{I}^{\leq \ell}$ as \emph{polynomial moment invariants}.
\end{definition}

Intuitively, the moment invariant ideal $\mathbb{I}^{\leq \ell}$ is the set of \emph{all} polynomials in the moments $\E^{\leq \ell}$ that vanish after any number of executed transitions.
For example, using Definition~\ref{def:mom:PInv}, a polynomial $p(\E[x], \E[y])$ in the expected values of the variables $x$ and $y$ is a \emph{polynomial moment invariant}, if $p(\E[x_n], \E[y_n]) = 0$ for all number of transitions $n \in \N_0$.
Note that, although $\E^{\leq \ell}$ is a finite set, the moment invariant ideal $\mathbb{I}^{\leq \ell}$ is, in general, an infinite set.

\begin{example}\label{ex:moment-inv-ideal}
Consider two asymmetric random walks $x_n$ and $y_n$ that both start at the origin.
Both random walks increase or decrease with probability $\nicefrac{1}{2}$, respectively.
The random walk $x_n$ either decreases by $2$ or increases by $1$, while $y_n$ behaves conversely, which means $y_n$ either decreases by $1$ or increases by $2$.
Following is a probabilistic loop encoding this process together with the moment invariant ideal $\mathbb{I}^{\leq 2}$.
The loop is given as program code.
The intended meaning of the expression $e_1 [pr] e_2$ is that it evaluates to $e_1$ with probability $pr$ and to $e_2$ with probability $1{-}pr$.
\medskip

\noindent
\begin{minipage}{\linewidth}
\begin{minipage}{0.35\linewidth}
\begin{algorithmic}
\State{$
    \begin{bmatrix}
        x~ & y
    \end{bmatrix}
    \gets
    \begin{bmatrix}
        0~ & 0
    \end{bmatrix}
    $
}
\While{$\star$}
    \vspace*{1mm}
    \State{$
        \begin{bmatrix}
            x \\ y
        \end{bmatrix}
        \gets
        \begin{bmatrix}
            x + 2 \ \left[ \nicefrac{1}{2} \right] \ x-1 \\ 
            y + 1 \ \left[ \nicefrac{1}{2} \right] \ y-2
        \end{bmatrix}
        $
    }
    \vspace*{1mm}
\EndWhile
\end{algorithmic}
\end{minipage}\hfill
\begin{minipage}{0.55\linewidth}
\problembox{
Basis of the moment invariant ideal $\I^{\leq 2}$:
\begin{align*}
    & \E \left[ x^2 \right] - \E \left[y^2 \right] \\
    & 9 \cdot \E[x] - 2 \cdot \E[xy] - 2 \cdot \E \left[y^2 \right] \\
    & \E[xy]^2 + 2 \cdot \E[xy] \cdot \E \left[y^2 \right] + \nicefrac{81}{4} \cdot \E[xy] + \E \left[y^2 \right]^2 \\
    & 2 \cdot \E[xy] + 9 \cdot \E[y] + 2 \cdot \E \left[y^2 \right] 
\end{align*}
}
\end{minipage}
\end{minipage}

\medskip
\noindent
This ideal $\mathbb{I}^{\leq 2}$ contains all algebraic relations that hold among $\E[x_n]$, $\E[y_n]$, $\E \left[ x_n^2 \right]$, $\E \left[ y_n^2 \right]$ and $\E[(xy)_n]$ after all number of iterations $n \in \N_0$.
The ideal provides information about the stochastic process encoded by the loop.
For instance, using the basis, it can be automatically checked that $\E[xy] - \E[x]\E[y]$ is an element of $\mathbb{I}^{\leq 2}$.
Hence, $\E[xy] = \E[x]\E[y]$ is an invariant, witnessing $x$ and $y$ being uncorrelated.
\end{example}

Moment invariant ideals of Definition~\ref{def:moment-ideal} generalize the notion of classical invariant ideals of Definition~\ref{def:invIdeal} for nonprobabilistic loops.
For a program variable $x$ of a nonprobabilistic loop, the expected value of $x$ after $n$ transitions is just the value of $x$ after $n$ iterations, that is $\E[x_n] = x_n$.
Furthermore, $\E[x_n \cdot y_n] = x_n \cdot y_n$ for all program variables $x$ and $y$.
Hence, a moment invariant such as $\E[x^2]^3 - \E[y]\E[y^2]$ corresponds to the classical invariant $x^6 - y^3$.
To formalize this observation, we introduce a function $\psi$ mapping invariants involving moments to classical invariants.

\begin{definition}[From Moment Invariants to Invariants]
Let $\mathcal{P}$ be a program with variables $x_1, \ldots, x_k$.
We define the natural \emph{ring homomorphism} $\psi\colon \overline{\Q}[\E^{\leq \ell}] \to \overline{\Q}[x_1, \ldots, x_k]$ extending $\psi(\E[M]) := M$.
That means, for all $p,q \in \overline{\Q}[\E^{\leq \ell}]$ and $c \in \overline{\Q}$ the function $\psi$ satisfies the properties (i) $\psi(p + q) = \psi(p) + \psi(q)$; (ii) $\psi(p \cdot q) = \psi(p) \cdot \psi(q)$; and (iii) $\psi(c \cdot p) = c \cdot \psi(p)$.
\end{definition}

The function $\psi$ maps polynomials over moments to polynomials over program variables, for example, $\psi(\E[x^2]^3 - \E[y]\E[y^2]) = \psi(\E[x^2])^3 - \psi(\E[y])\psi(\E[y^2]) = x^6 - y^3$.
If $p$ is a polynomial moment invariant of a \emph{probabilistic} program, $\psi(p)$ is in general \emph{not} a classical invariant.
However, for nonprobabilistic programs, $\psi(p)$ is necessarily an invariant for every moment invariant $p$, as we show in the next lemma.

\begin{lemma}[Moment Invariant Ideal Generalization]\label{lemma:moment-inv-generalization}
Let $\mathcal{L}$ be a \emph{nonprobabilistic} loop.
Let $\mathcal{I}$ be the classical invariant ideal and $\mathbb{I}^{\leq \ell}$ the moment invariant ideal of order $\ell$.
Then, $\mathbb{I}^{\leq \ell}$ and $\mathcal{I}$ are identical under $\psi$, that is 
\[
    \psi \left( \mathbb{I}^{\leq \ell} \right) := \left \{ \psi(p) \mid p \in \mathbb{I}^{\leq \ell} \right \} = \mathcal{I}.
\]
\end{lemma}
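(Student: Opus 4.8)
The statement is a set equality $\psi(\mathbb{I}^{\leq \ell}) = \mathcal{I}$, so I would prove the two inclusions separately. For both directions the key observation is that, in a nonprobabilistic loop, every configuration $\Scal_{q_1}^n$ is a singleton $\{\vec{x}(n)\}$ with probability $1$, so that $\E[M_n] = M(\vec{x}(n))$ for every monomial $M$; equivalently, evaluating a moment polynomial $p \in \overline{\Q}[\E^{\leq \ell}]$ at the tuple $(\E[M^{(1)}_n],\dots,\E[M^{(k)}_n])$ gives exactly the value $\psi(p)(\vec{x}(n))$. I would isolate this as a preliminary computation, using that $\psi$ is a ring homomorphism and that substitution into a polynomial commutes with the ring operations: $p(\E[M^{(1)}_n],\dots) = (\psi(p))(\vec{x}(n))$ for all $n$.

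\textbf{The inclusion $\psi(\mathbb{I}^{\leq \ell}) \subseteq \mathcal{I}$.} Take $p \in \mathbb{I}^{\leq \ell}$. By definition $p(\E[M^{(1)}_n],\dots,\E[M^{(k)}_n]) = 0$ for all $n \in \N_0$. By the preliminary computation this says $(\psi(p))(\vec{x}(n)) = 0$ for all $n$, i.e. $\psi(p)$ vanishes on every reachable configuration $\vec{x}(n) \in \Scal_{q_1}$. By Definition~\ref{def:inv}, $\psi(p)$ is an invariant of $\mathcal{L}$, hence $\psi(p) \in \mathcal{I}$.

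\textbf{The inclusion $\mathcal{I} \subseteq \psi(\mathbb{I}^{\leq \ell})$.} Let $r \in \mathcal{I} \subseteq \overline{\Q}[x_1,\dots,x_k]$. I need to exhibit a moment polynomial $p \in \mathbb{I}^{\leq \ell}$ with $\psi(p) = r$. The natural candidate is obtained by replacing each variable $x_i$ in $r$ by the moment $\E[x_i]$ — note $x_i \in \E^{\leq 1} \subseteq \E^{\leq \ell}$ since $\ell \geq 1$ — which gives a preimage $p$ with $\psi(p) = r$ (as $\psi(\E[x_i]) = x_i$ and $\psi$ is a ring homomorphism). It remains to check $p \in \mathbb{I}^{\leq \ell}$: by the preliminary computation, $p(\E[M^{(1)}_n],\dots) = (\psi(p))(\vec{x}(n)) = r(\vec{x}(n)) = 0$ for all $n$, the last equality because $r \in \mathcal{I}$ vanishes on all reachable states. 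Hence $p \in \mathbb{I}^{\leq \ell}$ and $r = \psi(p) \in \psi(\mathbb{I}^{\leq \ell})$.

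\textbf{Main obstacle.} There is no deep obstacle; the argument is essentially bookkeeping. The one point requiring care is the preliminary identity $p(\E[M^{(1)}_n],\dots,\E[M^{(k)}_n]) = (\psi(p))(\vec{x}(n))$: one must argue it cleanly by induction on the structure of $p$ (or, more slickly, observe that both sides are ring homomorphisms $\overline{\Q}[\E^{\leq \ell}] \to \overline{\Q}$ agreeing on the generators $\E[M]$, since $\E[M_n] = M(\vec{x}(n))$ holds for the deterministic semantics). A second minor subtlety is making sure the semantics $\Scal_{q_1}^n$ for a nonprobabilistic loop really is a single point so that $\E[M_n]$ is literally $M(\vec{x}(n))$ rather than a nontrivial average — this follows immediately from the deterministic collecting semantics of Section~\ref{sec:preliminaries:program-models}, where each transition is a function and $q_1$ has a single outgoing self-loop.
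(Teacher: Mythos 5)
Your proof is correct and follows essentially the same approach as the paper: the key observation in both is that for a nonprobabilistic loop $\E[M_n] = M(\vec{x}(n))$ for every monomial $M$, so the evaluation of a moment polynomial at the $n$-th moments coincides with the evaluation of its image under $\psi$ at the $n$-th configuration. The paper only spells out the inclusion $\psi(\mathbb{I}^{\leq \ell}) \subseteq \mathcal{I}$ and asserts the converse is "analogous", whereas you helpfully make the converse explicit by constructing the preimage $p$ via $x_i \mapsto \E[x_i]$ — a worthwhile elaboration, since that direction is not literally a mirror image.
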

\begin{proof}
We show that $\psi(\mathbb{I}^{\leq \ell}) \subseteq \mathcal{I}$.
The reasoning for $\mathcal{I} \subseteq \psi(\mathbb{I}^{\leq \ell})$ is analogous.

Let $q \in \psi(\mathbb{I}^{\leq \ell})$.
Then, there is a $p(\E[M^{(1)}], \ldots, \E[M^{(m)}]) \in \mathbb{I}^{\leq \ell}$ for some monomials in program variables $M^{(i)}$ such that $\psi(p) = p(M^{(1)}, \ldots, M^{(m)}) = q$.
The polynomial $p$ in moments of program variables is an invariant because it is an element of $\mathbb{I}^{\leq \ell}$.
Moreover, because the loop $\mathcal{L}$ is nonprobabilistic, we have $\E[M_n] = M_n$ for all number of transitions $n \in \N_0$ and all monomials $M$ in program variables \footnote{If the loop contains nondeterministic choice, this property holds with respect to every scheduler resolving nondeterminism. For readability and simplicity, we omit the treatment of schedulers and refer to~\cite{Barthe2020} for details on schedulers.}.
Hence, $q = p(M^{(1)}, \ldots, M^{(m)})$ necessarily is a classical invariant as in Definition~\ref{def:inv} and therefore $q \in \mathcal{I}$.
\end{proof}

Lemma~\ref{lemma:moment-inv-generalization} hence proves that Definition~\ref{def:moment-ideal} generalizes the notion of invariant ideals of nonprobabilistic loops.

\subsection{Computability of Moment Invariant Ideals}\label{sec:spinf:comp}
We next consider a special class of probabilistic loops, called \emph{moment-computable polynomial loops}.
For such loops, we prove that the bases for moment invariant ideals $\mathbb{I}^{\leq \ell}$ are computable for any order $\ell$.
Moreover, in Algorithm~\ref{alg:moment-inv-ideal} we give a decision procedure computing moment invariant ideals of moment-computable polynomial loops. 

Let us recall the semantical notion of \emph{moment-computable loops}~\cite{polar}, which we adjusted to our setting of polynomial probabilistic loops.

\begin{definition}[Moment-Computable Polynomial Loops]\label{def:MomCompLoop}
A polynomial probabilistic loop $\mathcal{L}$ is \emph{moment-computable} if, for any monomial $M$ in loop variables of $\mathcal{L}$, we have that $\E[M_n]$ exists and is computable as $\E[M_n]=f(n)$, where $f(n)$ is an exponential polynomial in $n$, describing sums of polynomials multiplied by exponential terms in $n$. That is, $f(n)=\sum_{i=0}^k p_i(n) \cdot \lambda^n$ where all $p_i \in \overline{\Q}[n]$ are polynomials and $\lambda \in \overline{\Q}$.
\end{definition}

As stated in~\cite{kauers2011concrete}, we note that any \lrs~\eqref{eq:Cfinite} has an exponential polynomial as closed form.
As proven in~\cite{polar}, when considering loops with affine assignments, probabilistic choice with constant probabilities, and drawing from probability distributions with constant parameters and existing moments, all moments of program variables follow linear recurrence sequences.
Moreover, one may also consider polynomial (and not just affine) loop updates such that non-linear dependencies among variables are acyclic.
If-statements can also be supported if the loop guards contain only program variables with a finite domain.
Under such structural considerations, the resulting probabilistic loops are moment-computable loops~\cite{polar}:
expected values $\E[M_n]$ for monomials $M$ over loop variables are exponential polynomials in $n$.
Furthermore, a basis for the polynomial relations among exponential polynomials is computable~\cite{KauersZ08}.
We thus obtain a decision procedure computing the bases of moment invariant ideals of moment-computable polynomial loops, as given in Algorithm~\ref{alg:moment-inv-ideal} and discussed next.

\begin{algorithm}[t]
\caption{Computing moment invariant ideals}\label{alg:moment-inv-ideal}
\begin{algorithmic}
\Require A moment-computable polynomial loop $\mathcal{L}$ and an order $\ell \in \N$
\Ensure A basis $B$ for the moment invariant ideal $\mathbb{I}^{\leq \ell}$
\LeftComment{Closed forms of moments as exponential polynomials}
\State $C \gets \texttt{compute_closed_forms}(\mathcal{L}, \E^{\leq \ell})$
\LeftComment{A basis for the ideal of all algebraic relations among sequences in $C$}
\State $B \gets \texttt{compute_algebraic_relations}(C)$
\State \Return $B$
\end{algorithmic}
\end{algorithm}

The procedure $\texttt{compute_closed_form}(\mathcal{L}, S)$ in Algorithm~\ref{alg:moment-inv-ideal} takes as inputs a moment-computable polynomial loop $\mathcal{L}$ and a set $S$ of moments of loop variables and computes exponential polynomial closed forms of the moments in $S$;
here, we adjust results of~\cite{polar} to implement $\texttt{compute_closed_form}(\mathcal{L}, S)$.
Further, $\texttt{compute_algebraic_relations}(C)$ in Algorithm~\ref{alg:moment-inv-ideal} denotes a procedure that takes a set $C$ of exponential polynomial closed forms as input and computes a basis for all algebraic relations among them;
in our work, we use~\cite{KauersZ08} to implement $\texttt{compute_algebraic_relations}(C)$. 
Soundness of Algorithm~\ref{alg:moment-inv-ideal} follows from the soundness arguments of~\cite{polar,KauersZ08}.
We implemented Algorithm~\ref{alg:moment-inv-ideal} in our tool called \texttt{Polar}\footnote{\url{https://github.com/probing-lab/polar}}, allowing us to automatically derive the strongest polynomial moment invariants of moment-computable polynomial loops.

\begin{example}
Using Algorithm~\ref{alg:moment-inv-ideal} for the probabilistic loop of Example~\ref{ex:moment-inv-ideal}, we compute a basis for the moment invariant ideal $\mathbb{I}^{\leq 2}$ in approximately $0.4$ seconds and for $\mathbb{I}^{\leq 3}$ in roughly $0.8$ seconds, on a machine with a \SI{2.6}{GHz} Intel i7 processor and \SI{32}{GB} of RAM.
\end{example}

\subsection{Hardness for Guarded Probabilistic Loops}\label{sec:spinv:guarded}

As Algorithm~\ref{alg:moment-inv-ideal} provides a decision procedure for moment-computable polynomial loops, a natural question is whether the moment invariant ideals remain computable if we relax 
\begin{enumerate}[label={{\bf(C\arabic*)}},wide=0em,leftmargin=0em]
\item \label{C1} the restrictions on the guards,
\item \label{C2} the structural requirements on the polynomial assignments
\end{enumerate}
of moment-computable polynomial loops.

We first focus on~\ref{C1}, that is, lifting the restriction on guards and show that in this case a basis for the moment invariant ideal of any order becomes uncomputable (Theorem~\ref{thm:prob-invariants-pcp}).

We recall the seminal result of~\cite{Muller-OlmS04} proving that the strongest polynomial invariant for \emph{nonprobabilistic} loops with affine updates, nondeterministic choice, and guarded transitions is uncomputable.
Interestingly, nondeterministic choice can be replaced by uniform probabilistic choice, allowing us to also establish the uncomputability of the strongest polynomial moment invariants, which means a basis for the ideal $\mathbb{I}^{\leq \ell}$, for any order $\ell$.

\begin{theorem}[Uncomputability of Moment Invariant Ideal]\label{thm:prob-invariants-pcp}
For the class of guarded probabilistic loops with affine updates, a basis for the moment invariant ideal $\mathbb{I}^{\leq \ell}$ is uncomputable for any order $\ell$.
\end{theorem}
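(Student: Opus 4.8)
The plan is to reduce from a known undecidable problem via the construction of \cite{Muller-OlmS04}, but to replace nondeterministic choice with uniform probabilistic choice and to argue at the level of moments rather than at the level of values. First I would recall the core of the Müller-Olm--Seidl reduction: from an instance of an undecidable problem (Post's Correspondence Problem, or equivalently the halting problem for two-counter machines, encoded via affine updates as in \cite{Muller-OlmS04}) they construct a guarded affine loop with nondeterministic choice such that a particular polynomial $p$ is an invariant if and only if the instance is negative. The guards are used to restrict to the ``valid'' execution traces, and the nondeterminism is used to guess the solution word. The key observation I would exploit is that, since the original argument only needs the \emph{set} of reachable valuations along valid traces, and since a uniform probabilistic branch assigns strictly positive probability to every branch that a nondeterministic choice could take, replacing each nondeterministic branch $b_1$ or $b_2$ by $b_1\,[\nicefrac{1}{2}]\,b_2$ preserves exactly which configurations are reachable with positive probability.

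Next I would connect reachability-with-positive-probability to moment invariants through the homomorphism $\psi$ and the observation underlying Lemma~\ref{lemma:moment-inv-generalization}, adapted to the probabilistic-but-finitely-supported setting. Concretely: for the constructed loop, after $n$ transitions the distribution $\Scal_q^n$ has finite support, and a polynomial $q$ over program variables vanishes on \emph{every} configuration reachable with positive probability if and only if the corresponding moment polynomial — obtained by ``lifting'' $q(x_1,\dots,x_k)$ to a polynomial over the moments $\E^{\leq\ell}$ in the spirit of $\psi^{-1}$ — lies in $\mathbb{I}^{\leq\ell}$. The subtle point here is that $\psi$ is not injective, so I cannot simply invoke Lemma~\ref{lemma:moment-inv-generalization}; instead I would argue directly that the target polynomial $p$ from the Müller-Olm--Seidl reduction has degree bounded by $\ell$ (or arrange the encoding so that it does — using only $\E[x_i]$ suffices if $p$ is linear, and in general one picks $\ell$ large enough or inflates variables), so that a canonical preimage $\hat p \in \Qbar[\E^{\leq\ell}]$ exists with $\psi(\hat p)=p$, and $\hat p \in \mathbb{I}^{\leq\ell}$ iff $p$ vanishes on the support of every $\Scal_q^n$, which by the preceding paragraph holds iff $p$ is a classical invariant of the nondeterministic loop, which by \cite{Muller-OlmS04} holds iff the undecidable instance is negative.

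Finally I would close the argument: if a basis $B$ for $\mathbb{I}^{\leq\ell}$ were computable, then by computing a Gröbner basis of $\langle B\rangle$ and testing whether $\hat p$ reduces to zero (ideal membership, which is decidable given a Gröbner basis, as recalled in Section~\ref{sec:preliminaries:alg-geom}), we would decide whether the original instance is negative — contradicting undecidability. Hence no such basis is computable, for any fixed order $\ell$. The main obstacle I anticipate is the second step: handling the non-injectivity of $\psi$ cleanly and making sure the finite-support distributions behave exactly like the nondeterministic reachable set — in particular checking that the guards interact correctly with the probabilistic semantics (a guarded probabilistic transition is only taken when enabled, and we need the enabled branches to coincide with the enabled nondeterministic branches and still carry positive probability after renormalization). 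A secondary technical point is ensuring the degree bound $\ell$ is compatible with the degree of the invariant polynomial $p$ used in \cite{Muller-OlmS04}; if $p$ has high degree one either raises $\ell$ or introduces auxiliary variables tracking the needed monomials, both of which are routine once the framework is set up.
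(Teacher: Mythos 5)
Your high-level plan is the same as the paper's — reduce from PCP by turning nondeterministic branches into uniform probabilistic branches and then detect the answer via ideal membership — but there is a genuine gap at the step where you translate ``$p$ fails to vanish on some positive-probability configuration'' into ``the corresponding moment polynomial $\hat p$ is not in $\mathbb{I}^{\leq\ell}$.'' For the canonical lift $\hat p$ (replace each monomial $M$ of $p$ by $\E[M]$) one has, by linearity of expectation, $\hat p(\E^{\leq\ell}_n) = \E[p_n]$; hence $\hat p\in\mathbb{I}^{\leq\ell}$ iff $\E[p_n]=0$ for all $n$. This is strictly \emph{weaker} than ``$p$ vanishes on the support of every $\Scal_q^n$'': expectations can cancel. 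So the implication you need — instance positive $\Rightarrow$ $\hat p\notin\mathbb{I}^{\leq\ell}$ — does not follow for an arbitrary Müller-Olm--Seidl target polynomial $p$, which can be positive on some reachable valuations and negative on others. (As a sanity check: in the symmetric random walk, $x$ is nonzero on almost every reachable state yet $\E[x_n]\equiv 0$.) You flag a worry about the non-injectivity of $\psi$, but that is not where the problem lies; the problem is the failure of the equivalence between pointwise vanishing and vanishing in expectation.

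The paper closes exactly this gap with a dedicated construction rather than by recycling the polynomial from~\cite{Muller-OlmS04}. It builds its own PCP simulation with a fresh indicator variable $t\in\{0,1\}$ that is set to $1$ by a guard precisely when a match $x=y>0$ has been found, and then stays $1$. Because $t$ is nonnegative, $\E[t_n]=\P(t_n=1)$, so positive-probability reachability of a matching configuration forces $\E[t_n]\geq(1/N)^K>0$ and hence $\E[t]\notin\mathbb{I}^{\leq\ell}$, while non-existence of a PCP solution keeps $t\equiv0$ and puts $\E[t]\in\mathbb{I}^{\leq\ell}$. To repair your proof you would need a similar sign-definiteness or indicator mechanism so that ``not identically zero on the support'' implies ``nonzero expectation''; once you add that, the rest of your argument (uniform probability preserving positive-probability reachability; ideal membership decidable from a Gröbner basis) is fine and matches the paper's reasoning.
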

\begin{proof}
The proof is by reduction from Post's correspondence problem (PCP), which is undecidable \cite{Post46}.
A PCP instance consists of a finite alphabet $\Sigma$ and a finite set of tuples $\left \{ (x_i, y_i) \mid 1 \leq i \leq N, x_i, y_i \in {\Sigma}^\ast \right \}$.
A solution is a sequence of indices $(i_k)$, $1 \leq k \leq K$ where $i_k \in \{ 1, \ldots, N \}$ and the concatenations of the substrings indexed by the sequence are identical, written in symbols as
\[ 
    x_{i_1} \cdot x_{i_2} \cdot \ldots \cdot x_{i_K} = y_{i_1} \cdot y_{i_2} \cdot \ldots \cdot y_{i_K} 
\]
Note that the tuple elements may be of different lengths.
Moreover, any instance of the PCP over a finite alphabet $\Sigma$ can be equivalently represented over the alphabet $\{ 0,1 \}$ by a binary encoding.

Now, given an instance of the (binary) PCP, we construct the guarded probabilistic loop with affine updates shown in Figure~\ref{fig:prob-invariants:pcp}.
We encode the binary strings as integers and denote a transition with probability $pr$, guard $g$ and updates $f$ as $[pr]:g:\vec{x}\gets f(\vec{x})$.

\smallskip
\begin{figure}[htb]
    \centering
    \begin{tikzpicture}[->,>=stealth',shorten >=1pt,auto,node distance=5cm, semithick, outer sep=auto] 
\tikzset{
dots/.style={state,draw=none}, 
edge/.style={->}
}
\tikzstyle{every state}=[fill=white,draw=black,thick,text=black,scale=1, initial text={}] 
\path[use as bounding box] (-0.5,-0.5) rectangle
        (13,1.5);
\node[state, initial above] (q0) {$q_0$};
\node[state] (q1) [right of=q0] {$q_1$}; 
\draw[->] (q1) to [in=60, out=120, loop, looseness=6] node[above] {$[1] : (x = y \land x > 0): t \gets 1$} (q1);
\draw[->] (q0) to [in=180, out=0] node[above] {$[1]:\top: x, y, t \gets 0, 0, 0$} (q1);
\draw[->] (q1) to [in=-45, out=45, loop, looseness=6] node[right] {
    \begin{tabular}{l}
         for each $1 \leq i \leq N$:  \\[0.1cm]
         $[\nicefrac{1}{N}] : (x \neq y \lor x \leq 0): 
            \begin{bmatrix}
                x \\ y
            \end{bmatrix}
            \gets
            \begin{bmatrix}
                2^{\lvert x_i \rvert} \cdot x + x_i \\
                2^{\lvert y_i \rvert} \cdot y + y_i
            \end{bmatrix}
            $ 
    \end{tabular}
} (q1);
\draw[->,dashed] (q1) to [in=-40, out=40, loop, looseness=5] (q1);
\draw[->,dashed] (q1) to [in=-35, out=35, loop, looseness=4] (q1);
\end{tikzpicture}
    \caption{A guarded probabilistic loop with affine updates simulating the PCP.}
    \label{fig:prob-invariants:pcp}
\end{figure}
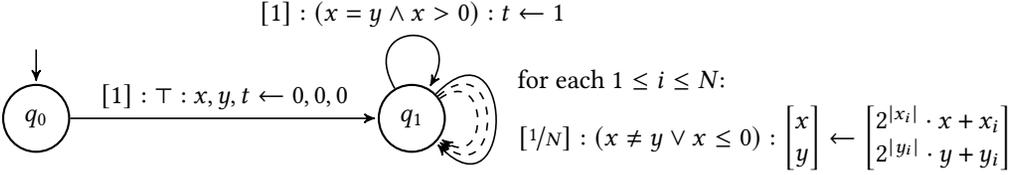

The idea is to pick a pair of integer-encoded strings uniformly at random and append them to the string built so far.
This is done by left-shifting the existing bits of the string (by multiplying by a power of $2$) and adding the randomly selected string.

If the PCP instance does not have a solution, we have $t = 0$ after every transition.
Hence, $\E[t] = 0$ must be an invariant.
Therefore, $\E[t]$ is necessarily an element of $\mathbb{I}^{\leq \ell}$ for any order $\ell$.

If the PCP instance does have a solution $(i_k), 1 \leq k \leq K$, then after exactly $n=K+2$ transitions it holds that $\P(x_n = y_n) \geq \left( \frac{1}{N} \right)^K$, as this is the probability of choosing the correct sequence uniformly at random. 
Because $t$ is an indicator variable, $\E[t_n] = \P(t_n = 1) = \P(x_n = y_n) \geq \left( \frac{1}{N} \right)^K > 0$.
Hence, $\E[t_n] \neq 0$ after $n$ transitions and $\E[t]$ cannot be an element of $\mathbb{I}^{\leq \ell}$ for any order $\ell$.

Consequently, for all orders $\ell$, the PCP instance has a solution if and only if $\E[t]$ is an element of $\mathbb{I}^{\leq \ell}$.
However, given a basis, checking for ideal membership is decidable (cf. Section~\ref{sec:preliminaries:alg-geom}).
Hence, a basis for the moment invariant ideal $\mathbb{I}^{\leq \ell}$ must be uncomputable for any order $\ell$.
\end{proof}

Note that the PCP reduction within the proof of Theorem~\ref{thm:prob-invariants-pcp} requires only affine updates and affine invariants.
Therefore, allowing loop guards renders even the problem of finding the strongest affine invariant for a finite set of moments uncomputable for probabilistic loops with affine updates.

\subsection{Hardness for Unguarded Polynomial Probabilistic Loops}\label{sec:spinv:unguarded}

In this section we address challenge~\ref{C2}, that is, study computational lower bounds for computing a basis of moment invariant ideals for probabilistic loops that lack guards and nondeterminism, but feature arbitrary polynomial updates.
We show that addressing \ref{C2} boils down to solving the \probspinv problem of Section~\ref{sec:intro}, which in turn we prove to be \skolem-hard (Theorem~\ref{thm:spinv-probspinv}).
As such, computing the moment invariant ideals of probabilistic loops with arbitrary polynomial updates as stated in~\ref{C2} is \skolem-hard. 

We restrict our attention to moment invariant ideals of order $1$.
Intuitively, a basis for $\mathbb{I}^{\leq 1}$ is easier to compute than $\mathbb{I}^{\leq \ell}$ for $\ell > 1$.
A  formal justification in this respect is given by the following lemma.

\begin{lemma}[Moment Invariant Ideal of Order 1]\label{lem:basis1}
Given a basis for the moment invariant ideal $\mathbb{I}^{\leq \ell}$ for any order $\ell \in \N$, a basis for $\mathbb{I}^{\leq 1}$ is computable.
\end{lemma}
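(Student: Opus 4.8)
The plan is to exhibit $\mathbb{I}^{\leq 1}$ as an elimination ideal of $\mathbb{I}^{\leq \ell}$, so that a basis for it can be extracted from a basis of $\mathbb{I}^{\leq \ell}$ by a standard Gröbner-basis elimination. Concretely, the formal variables over which $\mathbb{I}^{\leq \ell}$ lives are all moments $\E[M]$ with $\deg M \leq \ell$; among these, the order-$1$ moments $\E[x_1], \ldots, \E[x_k]$ (and the constant $\E[1]=1$, if one includes it) form a distinguished subset. Let $R_\ell = \overline{\Q}[\E^{\leq \ell}]$ and $R_1 = \overline{\Q}[\E^{\leq 1}] \subseteq R_\ell$. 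I claim
\[
    \mathbb{I}^{\leq 1} = \mathbb{I}^{\leq \ell} \cap R_1,
\]
i.e. the order-$1$ moment invariant ideal is exactly the contraction of the order-$\ell$ moment invariant ideal to the subring generated by the first-order moments. Granting this, the lemma follows: given a basis $B$ of $\mathbb{I}^{\leq \ell}$, compute a Gröbner basis with respect to an elimination order that ranks every moment of degree $\geq 2$ above every moment of degree $\leq 1$; the elements of that Gröbner basis lying in $R_1$ form a basis of $\mathbb{I}^{\leq \ell} \cap R_1 = \mathbb{I}^{\leq 1}$, by the Elimination Theorem for Gröbner bases (cf. Section~\ref{sec:preliminaries:alg-geom} and~\cite{CoxLittleOshea97}).

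The main work is therefore proving the displayed identity. The inclusion $\mathbb{I}^{\leq 1} \subseteq \mathbb{I}^{\leq \ell} \cap R_1$ is essentially immediate: a polynomial $p \in R_1$ that vanishes on the trajectory $n \mapsto (\E[x_{1,n}], \ldots, \E[x_{k,n}])$ of first-order moments is, viewed inside $R_\ell$, a polynomial that vanishes on the trajectory of all moments of order $\leq \ell$ after every number of transitions $n \in \N_0$, hence $p \in \mathbb{I}^{\leq \ell}$; and it lies in $R_1$ by assumption. For the reverse inclusion, take $p \in \mathbb{I}^{\leq \ell} \cap R_1$. Since $p$ only mentions first-order moment symbols, evaluating $p$ at the point $\big(\E[M^{(1)}_n], \ldots, \E[M^{(m)}_n]\big)$ for any $n$ only ever uses the coordinates corresponding to $\E[x_{i,n}]$; so $p$ vanishing on the full moment trajectory is the same statement as $p(\E[x_{1,n}], \ldots, \E[x_{k,n}]) = 0$ for all $n$, which is precisely the defining condition for membership in $\mathbb{I}^{\leq 1}$ (Definition~\ref{def:moment-ideal}). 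Thus $p \in \mathbb{I}^{\leq 1}$, completing the identity.

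The step I expect to require the most care is the bookkeeping around how $\mathbb{I}^{\leq 1}$ and $\mathbb{I}^{\leq \ell}$ are presented: Definition~\ref{def:moment-ideal} fixes an enumeration $\E^{\leq \ell} = \{\E[M^{(1)}], \ldots, \E[M^{(m)}]\}$ of the bounded-degree moments, and one must make sure the order-$1$ symbols are syntactically identified as a subset of these (rather than as an abstractly isomorphic copy), so that the inclusion $R_1 \subseteq R_\ell$ is literal and the contraction argument goes through without a change of variables. Once this identification is fixed, the rest is routine: the vanishing-on-a-sequence characterisation is preserved under restricting attention to a coordinate subset, and the Gröbner elimination step is a black box already recalled in the preliminaries. (Note the statement only claims computability of \emph{a} basis for $\mathbb{I}^{\leq 1}$, so no minimality or canonicity of the extracted basis needs to be argued beyond what the reduced Gröbner basis already provides.)
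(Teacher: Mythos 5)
Your proposal is correct and takes essentially the same route as the paper: identify $\mathbb{I}^{\leq 1}$ as the elimination ideal $\mathbb{I}^{\leq \ell} \cap \overline{\Q}[\E^{\leq 1}]$ and extract a basis for it by a Gröbner-basis elimination. You spell out the set-equality argument in more detail than the paper does (the paper states the identity directly, citing that $\E^{\leq 1} \subseteq \E^{\leq \ell}$), but the substance and the key computational step are the same.
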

\begin{proof}
The moment invariant ideal $\mathbb{I}^{\leq \ell}$ is an ideal in the polynomial ring with variables $\E^{\leq \ell}$.
Moreover, $\E^{\leq 1} \subseteq \E^{\leq \ell}$.
Hence, $\mathbb{I}^{\leq 1} = \mathbb{I}^{\leq \ell} \cap \overline{\Q}[\E^{\leq 1}]$, meaning $\mathbb{I}^{\leq 1}$ is an elimination ideal of $\mathbb{I}^{\leq \ell}$.
Given a basis for a polynomial ideal, bases for elimination ideals are computable \cite{CoxLittleOshea97}.
\end{proof}

Using Lemma~\ref{lem:basis1}, we translate challenge~\ref{C2} into the \probspinv problem of Section~\ref{sec:intro}, formally defined as follows. 

\problembox{The \probspinv Problem: Given an unguarded, probabilistic loop with polynomial updates and without nondeterministic choice, compute a basis of the moment invariant ideal of order $1$.}

Recall that computing a basis for the classical invariant ideal for nonprobabilistic programs with arbitrary polynomial updates, that is, deciding \spinv, is \skolem-hard (Theorem~\ref{theorem:skolem} and Theorem~\ref{theorem:ptp-spinv}). 
We next show that \spinv reduces to \probspinv, thus implying \skolem-hardness of \probspinv as a direct consequence of Lemma~\ref{lemma:moment-inv-generalization}.

\begin{theorem}[Hardness of \probspinv]\label{thm:spinv-probspinv}
\probspinv is at least as hard as \spinv, in symbols $\spinv \leq \probspinv$.
\end{theorem}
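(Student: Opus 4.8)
The plan is to reduce \spinv to \probspinv essentially by the identity map on loops, post-composed with the ring homomorphism $\psi$ from Lemma~\ref{lemma:moment-inv-generalization}. Given a \spinv instance, that is, an unguarded deterministic single-path loop $\mathcal{L}$ with polynomial updates over variables $x_1, \ldots, x_k$, I would first observe that $\mathcal{L}$ is already a legal \probspinv instance: assigning probability $1$ to each of its two transitions yields an unguarded probabilistic loop in which every location has a single outgoing transition, hence no nondeterministic choice. I would then invoke the assumed \probspinv oracle on $\mathcal{L}$ to obtain a finite basis $B$ of the order-$1$ moment invariant ideal $\mathbb{I}^{\leq 1}$, and return $\psi(B) := \{ \psi(b) \mid b \in B \}$, where $\psi\colon \overline{\Q}[\E^{\leq 1}] \to \overline{\Q}[x_1, \ldots, x_k]$ is the homomorphism with $\psi(\E[x_i]) = x_i$ (and $\psi(\E[1]) = 1$).

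The correctness argument has two ingredients. First, $\psi$ is surjective, since its image contains all variables $x_i$ and all constants; and for a surjective ring homomorphism the image of a finitely generated ideal is generated by the images of the generators, so $\langle \psi(B) \rangle = \psi(\langle B \rangle) = \psi(\mathbb{I}^{\leq 1})$. Second, because $\mathcal{L}$ is nonprobabilistic, Lemma~\ref{lemma:moment-inv-generalization} applied with $\ell = 1$ gives $\psi(\mathbb{I}^{\leq 1}) = \mathcal{I}$, the classical invariant ideal of $\mathcal{L}$. Chaining these equalities shows $\langle \psi(B) \rangle = \mathcal{I}$, so $\psi(B)$ is a basis of the polynomial invariant ideal of $\mathcal{L}$, which is exactly the output that \spinv requires. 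Each step (viewing $\mathcal{L}$ as probabilistic, calling the oracle, applying $\psi$ coefficientwise) is effective, so this is a genuine many-one reduction $\spinv \leq \probspinv$; should a reduced Gröbner basis be desired, one simply runs Buchberger's algorithm on $\psi(B)$.

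I do not expect a real obstacle: the substance has already been carried by Lemma~\ref{lemma:moment-inv-generalization}, and what remains is the standard fact about images of ideals under surjective homomorphisms. The one subtlety worth stating carefully is that $\psi$ is \emph{not} injective (it collapses $\E[1]$ to $1$), so one cannot transport Gröbner-basis structure through $\psi$ directly; but surjectivity alone suffices for the ideal-image identity used above, and \spinv only asks for \emph{some} basis. Finally, combining this reduction with Theorems~\ref{theorem:skolem} and~\ref{theorem:ptp-spinv} yields the chain $\skolem \leq \ptp \leq \spinv \leq \probspinv$, so \probspinv inherits \skolem-hardness.
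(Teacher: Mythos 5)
Your proposal is correct and follows essentially the same route as the paper: view the deterministic loop as a degenerate probabilistic loop, invoke the \probspinv oracle, and push the resulting basis of $\mathbb{I}^{\leq 1}$ through the homomorphism $\psi$, with correctness supplied by Lemma~\ref{lemma:moment-inv-generalization}. The only discrepancy is cosmetic: the paper treats $\psi$ restricted to the order-$1$ moments $\E[x_1],\ldots,\E[x_k]$ as a ring isomorphism onto $\overline{\Q}[x_1,\ldots,x_k]$, whereas you more cautiously note that on the full ring $\overline{\Q}[\E^{\leq 1}]$ the map is merely surjective because it collapses $\E[1]$ to $1$, and correctly observe that surjectivity alone is enough to transport a generating set of the ideal.
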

\begin{proof}
Assume $\mathcal{L}$ is an instance of \spinv.
That is, $\mathcal{L}$ is a deterministic loop with polynomial updates.
Let $x_1, \ldots, x_k$ be the program variables and $\mathcal{I}$ the classical invariant ideal of $\mathcal{L}$.
Note that $\mathcal{L}$ is also an instance of \probspinv and assume $B$ is a basis for the moment invariant ideal $\mathcal{\I}^{\leq 1}$.
From Lemma~\ref{lemma:moment-inv-generalization} we know that $\psi(\mathcal{\I}^{\leq 1}) = \mathcal{I}$.
For order $1$, the function $\psi$ is a ring isomorphism between the polynomial rings $\overline{\Q}[x_1, \ldots, x_k]$ and $\overline{\Q}[\E[x_1], \ldots, \E[x_k]]$.
Hence, the set $\{ \psi(b) \mid b \in B \}$ is a basis for $\mathcal{I}$.
Therefore, given a basis for $\mathbb{I}^{\leq 1}$, a basis for $\mathcal{I}$ is computable.
\end{proof}

Theorem~\ref{thm:spinv-probspinv} shows that \probspinv is at least as hard as the \spinv problem. 
Together with Theorem~\ref{theorem:skolem} and Theorem~\ref{theorem:ptp-spinv}, we conclude the following chain of reductions: 
$$\skolem \leq \ptp \leq \spinv \leq \probspinv$$

\paragraph{On attempting to prove uncomputability of \probspinv -- A remaining open challenge.}
While Theorem~\ref{thm:spinv-probspinv} asserts that \probspinv is \skolem-hard, it could be that \probspinv is uncomputable.

Recall that for proving the uncomputability of moment invariant ideals for guarded probabilistic programs in Theorem~\ref{thm:prob-invariants-pcp}, we replaced nondeterministic choice with probabilistic choice.
The \lq\lq nondeterministic version\rq\rq\ of \probspinv refers to computing the strongest polynomial invariant for nondeterministic polynomial programs, which has been recently established as uncomputable~\cite{Ouaknine23}.
Therefore, it is natural to consider transferring the uncomputability results of~\cite{Ouaknine23} to \probspinv by replacing nondeterministic choice with probabilistic choice.
However, such a generalization of~\cite{Ouaknine23} to the probabilistic setting poses considerable problems and ultimately fails to establish the potential uncomputability of \probspinv, for the reasons discussed next.

The proof in \cite{Ouaknine23} reduces the Boundedness problem for Reset Vector Addition System with State (VASS) to the problem of finding the strongest polynomial invariant for nondeterministic polynomial programs.
A Reset VASS is a nondeterministic program where any transition may increment, decrement, or reset a vector of unbounded, non-negative variables.
Importantly, a transition can \emph{only be executed if no zero-valued variable is decremented}.
The \emph{Boundedness Problem for Reset VASS} asks, given a Reset VASS and a specific program location, whether the set of reachable program configurations is finite. The Boundedness Problem for Reset VASS is undecidable \cite{DufourdFS98} and therefore instrumental in the reduction of~\cite{Ouaknine23}.

Namely, in the reduction of~\cite{Ouaknine23} to prove uncomputability of the strongest polynomial invariant for nondeterministic polynomial programs, an arbitrary Reset VASS $\mathcal{V}$ with $n$ variables $a_1, \ldots, a_n$ is simulated by a nondeterministic polynomial program $\mathcal{P}$ with $n{+}1$ variables $b_0, \ldots b_n$.
Note that the programming model is purely nondeterministic, that is, without equality guards, since introducing guards would render the problem immediately undecidable~\cite{Muller-OlmS04}.
To avoid zero-testing the variables before executing a transition, the crucial point in the reduction of~\cite{Ouaknine23} is to map invalid traces to the vector $\vec{0}$ and faithfully simulate valid executions.
By properties of the reduction, it holds that the configuration $(b_0, \ldots, b_n)$ is reachable in $\mathcal{P}$, if and only if there exists a corresponding configuration $\nicefrac{1}{b_0} \cdot (b_1, \ldots, b_n)$ in $\mathcal{V}$.
Essential to the reduction of~\cite{Ouaknine23} is, that even though there may be multiple configurations in $\mathcal{P}$ for each configuration in $\mathcal{V}$, all these configurations are only scaled by the factor $b_0$ and hence collinear. 
By collinearity, the variety of the invariant ideal can be covered by a finite set of lines if and only if the set of reachable VASS configurations is finite. Testing this property is decidable, and hence finding the invariant ideal must be undecidable.

Transferring the reduction of~\cite{Ouaknine23} to the probabilistic setting of \probspinv by replacing nondeterministic choice with probabilistic choice poses the following problem:
in the nondeterministic setting, any path is independent of all other paths.
However, this does not hold in the probabilistic setting of \probspinv.  
The expected value operator $\E[x_n]$ aggregates all possible valuations of $x$ in iteration $n$ across all possible paths through the program.
Specifically, the expected value is a linear combination of the possible configurations of $\mathcal{V}$, which is not necessarily limited to a collection of lines but may span a higher-dimensional subspace.
This is the step where a reduction similar to~\cite{Ouaknine23} fails for \probspinv.

\begin{example}\label{ex:prob-invariants-proof}
Consider a Reset VASS $\mathcal{V}$ with variable $x$ initialized to $0$, initial state $q_0$, and additional state $q_1$.
Assume a single transition from $q_0$ to $q_1$ incrementing $x$ and two transitions from $q_1$ to $q_1$.
One transition from $q_1$ to $q_1$ decrements $x$, whereas the other leaves $x$ unchanged.
In a Reset VASS, it is forbidden to decrement a zero-valued variable.
Therefore, the set of reachable configurations in $q_1$ is $\{0, 1\}$ and hence finite.
The reduction in \cite{Ouaknine23} constructs from $\mathcal{V}$ a nondeterministic polynomial program $\mathcal{P}$ with two variables $y$ and $z$.
Similar to $\mathcal{V}$, the program $\mathcal{P}$ has two states $\hat{q_0}$ and $\hat{q_1}$, one transition from $\hat{q_0}$ to $\hat{q_1}$ and two transitions from $\hat{q_1}$ to itself.
In contrast to $\mathcal{V}$, the transitions in $\mathcal{P}$ model polynomial assignments for the variables $y$ and $z$.
For more details on the reduction, we refer to \cite{Ouaknine23}.
Important are the reachable configurations of $\mathcal{P}$ depicted in the computation tree in Figure~\ref{fig:prob-invariants:comp-tree}.
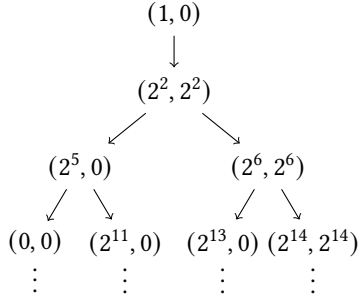
\begin{figure}[htb]
    \centering
    \begin{tikzpicture}[->] 
\tikzset{
level 2/.style = {sibling distance = 2.5cm},
level 3/.style = {sibling distance = 1.2cm}
}

\node {$(1,0)$} [level distance = 1cm]
    child {node {$(2^2,2^2)$}
        child {node {$(2^5,0)$}
            child {node {$(0,0)$}[level distance = 0.5cm]
                child {edge from parent[draw=none] node {\vdots}}
            }
            child {node {$(2^{11},0)$}[level distance = 0.5cm]
                child {edge from parent[draw=none] node {\vdots}}
            } 
        }
        child {node {$(2^6,2^6)$}
            child {node {$(2^{13},0)$}[level distance = 0.5cm]
                child {edge from parent[draw=none] node {\vdots}}
            }
            child {node {$(2^{14},2^{14})$}[level distance = 0.5cm]
                child {edge from parent[draw=none] node {\vdots}}
            } 
        }
    };

\end{tikzpicture}
    \caption{Computation tree of the program $\mathcal{P}$ from Example~\ref{ex:prob-invariants-proof}.}
    \label{fig:prob-invariants:comp-tree}
\end{figure}
For every reachable configuration $(y,z) \neq (0,0)$ we have $\nicefrac{z}{y} \in \{0,1\}$
Hence, all reachable configurations lie on finitely many lines.
Replacing nondeterministic choice in the state $\hat{q_1}$ by uniform probabilistic choice and considering expected values breaks this central property of the reduction.
The sequence of expected values for $y$ and $z$ can be obtained by averaging the variable values for every level in the computation tree in Figure~\ref{fig:prob-invariants:comp-tree} and is $(1,0), (4,4), (48, 32), (4096, 6656), \dots$.
It can be calculated that the ratios of the expected values after $n \geq 1$ transitions are given by
$$
\frac{\E[z_n]}{\E[y_n]} = \frac{1}{\sum_{i=0}^{n-1} \frac{1}{2^{2^i - 1}}},
$$
and hence the points $\{ (\E[y_n], \E[z_n]) \mid n \in \N \}$ cannot be covered with finitely many lines.
\end{example}

It is however worth noting how well-suited the Boundedness Problem for Reset VASS is for proving the undecidability of problems for unguarded programs.
A Reset VASS is not powerful enough to determine if a variable is zero, yet the Boundedness Problem is still undecidable.
The vast majority of other undecidable problems that may be used in a reduction are formulated in terms of counter-machines, Turing machines, or other automata that rely on explicitly determining if a given variable is zero, hindering a straightforward simulation as unguarded programs.
Therefore, we conjecture that any attempt towards proving (un)computability of \probspinv would require a new methodology, unrelated to~\cite{Ouaknine23}. We leave this task as an open challenge for future work.

\subsection{Summary of Computability Results for Probabilistic Polynomial Loop Invariants}

We finally conclude this section by summarizing our computability results on the strongest polynomial (moment) invariants of probabilistic loops. We overview our results in Table~\ref{strongest-invariant:table:overview-prob-invariants}.

\begin{table}[htb]
    \setlength{\tabcolsep}{0.5em}
    \centering
    \resizebox{\textwidth}{!}{%
    \begin{tabular}{|l|l|l||cl|cl|} \hline
        \multicolumn{3}{|l||}{Program Model} & \multicolumn{2}{c}{Strongest Affine Invariant} & \multicolumn{2}{|c|}{Strongest Polynomial Invariant} \\ \hline
        \multirow{4}*{Prob.} & Unguarded \& & Affine & \checkmark & Algorithm~\ref{alg:moment-inv-ideal} & \checkmark & Algorithm~\ref{alg:moment-inv-ideal} \\ \cline{3-7}
        & Guarded (finite) & Poly. & \textbf{?} & & \skolem-hard & Theorem~\ref{thm:spinv-probspinv} \\ \cline{2-7}
        & \multirow{2}*{Guarded ($=, <$)} & Affine & \multicolumn{4}{c|}{\multirow{2}{*}{\xmark \hspace{0.2cm} Theorem~\ref{thm:prob-invariants-pcp}}} \\ \cline{3-3}
        & & Poly. & \multicolumn{4}{c|}{} \\ \hline
    \end{tabular}}
    \medskip
    \caption{Our computability results for strongest polynomial (moment) invariants of polynomial \emph{probabilistic} loops. The symbol '\checkmark' denotes computable problems, '\textbf{?}' shows open problems, and '\xmark' marks uncomputable problems.}
    \label{strongest-invariant:table:overview-prob-invariants}
\end{table}

\section{Related Work}\label{sec:related-work}

We discuss our work in relation to the state-of-the-art in computing strongest (probabilistic) invariants and analyzing point-to-point reachability.

\paragraph{Strongest Invariants.}
Algebraic invariants were first considered for unguarded deterministic programs with affine updates \cite{Karr76}.
Here, a basis for both the ideal of affine invariants and for the ideal of polynomial invariants is computable \cite{Karr76, Kovacs08}.

For unguarded deterministic programs with polynomial updates, all invariants of \emph{bounded degree} are computable \cite{Muller-OlmS04-2}, while the more general task of computing a basis for the ideal of \emph{all} polynomial invariants, that is solving our \spinv problem, was stated as an open problem.
In Section~\ref{sec:ptp-spinv} we proved that \spinv is at least as hard as \skolem and \ptp.
Strengthening these results by proving computability for \spinv would result in a major breakthrough in number theory, as this would imply the decidability of the \skolem problem.

For guarded deterministic programs, the strongest affine invariant is uncomputable, even for programs with only affine updates. 
This is a direct consequence of the fact that this model is sufficient to encode Turing machines and allows us to encode the Halting problem \cite{HopcroftUllman69}.
Nevertheless, there exists a multitude of incomplete methods capable of extracting useful invariants even for non-linear programs, for example, based on abstract domains~\cite{KincaidCBR18}, over-approximation in combination with recurrences \cite{FarzanK15,KincaidBCR19} or using consequence finding in tractable logical theories of non-linear arithmetic \cite{KincaidKZ23}.

For nondeterministic programs with affine updates, a basis for the invariant ideal is computable \cite{Karr76}.
Furthermore, the set of invariants of bounded degree is computable for nondeterministic programs with polynomial updates, while bases for the ideal of all invariants are uncomputable \cite{Muller-OlmS04-2,Ouaknine23}.
Additionally, even a single transition guarded by an equality or inequality predicate renders the problem uncomputable, already for affine updates \cite{Muller-OlmS04}.

\paragraph{Point-To-Point Reachability.}
The Point-To-Point reachability problem formalized by our \ptp problem appears in various areas dealing with discrete systems, such as dynamical systems, discrete mathematics, and program analysis.
For linear dynamical systems, \ptp is known as the \emph{Orbit problem}~\cite{ChonevOW13}, with a significant amount of work on analyzing and proving decidability of \ptp for linear systems \cite{KannanL80,ChonevOW13,ChonevOW15,Baier0JKLLOPW021}.
In contrast, for polynomial systems, the \ptp problem remained open regarding decidability or computational lower bounds.
Existing techniques in this respect resorted to approximate techniques \cite{Dreossi17,DangT12}.
Contrarily to these works, in Section~\ref{sec:skolem-ptp} we rigorously proved that \ptp for polynomial systems is at least as hard as the \skolem problem.
The \ptp problem is essentially undecidable already for affine systems that additionally include nondeterministic choice \cite{Finkel13,Ko18}.

\paragraph{Probabilistic Invariants.}
Invariants for probabilistic loops can be defined in various incomparable ways, depending on the context and use case.
Dijkstra's weakest-precondition calculus for classical programs was generalized to the weakest-preexpectation (wp) calculus in the seminal works~\cite{Kozen83,Kozen85,McIver05}.
In the wp-calculus, the semantics of a loop can be described as the least fixed point of the \emph{characteristic} function of the loop in the lattice of so-called \emph{expectations}~\cite{Kaminski19}.
Invariants are expectations that over- or under-approximate this fixed point and are called super- or sub-invariants, respectively.
One line of research is to synthesize such invariants using templates and constraint-solving methods \cite{Gretz13,BatzCKKMS20,BatzCJKKM23}.
A calculus, analogous to the wp-calculus, has been introduced for expected runtime analysis \cite{KaminskiKMO18} and amortized expected runtime analysis \cite{BatzKKMV23}.
The work of~\cite{ChatterjeeNZ17} introduces the notion of \emph{stochastic invariants}, that is, expressions that are violated with bounded probability.
Other notions of probabilistic invariants involve martingale theory~\cite{BartheEFH16} or utilize bounds on the expected value of program variable expressions \cite{Chakarov14}.
The techniques presented in \cite{polar,Bartocci19} compute closed forms for moments of program variables parameterized by the loop counter.

The different notions of probabilistic invariants, in general, do not form ideals or are relative to some other expression. 
Furthermore, the existing procedures to compute invariants are heuristics-driven and hence incomplete. Contrarily to these, our \emph{polynomial moment invariants} presented in Section~\ref{sec:prob-invariants} form ideals and relate all variables.
Moreover, our Algorithm~\ref{alg:moment-inv-ideal} computes a basis for \emph{all} moment invariants and is complete for the class of moment-computable polynomial loops. Going beyond such loops, we showed that \probspinv is \skolem-hard and/or uncomputable (Theorem~\ref{thm:spinv-probspinv} and Theorem~\ref{thm:prob-invariants-pcp}).

\section{Conclusion}\label{sec:conclusion}

We prove that computing the strongest polynomial invariant for single-path loops with polynomial assignments (\spinv) is at least as hard as the \skolem problem, a famous problem whose decidability has been open for almost a century.
As such, we provide the first non-trivial lower bound for computing the strongest polynomial invariant for deterministic polynomial loops, a quest introduced in \cite{Muller-OlmS04}.
As an intermediate result, we show that point-to-point reachability in deterministic polynomial loops (\ptp), or equivalently in discrete-time polynomial dynamical systems, is \skolem-hard. Further, we devise a reduction from \ptp to \spinv.
We generalize the notion of invariant ideals from classical programs to the probabilistic setting, by introducing \emph{moment invariant ideals} and addressing the \probspinv problem.
We show that the strongest polynomial moment invariant, and hence \probspinv, is (i) computable for the class of \emph{moment-computable} probabilistic loops, but becomes (ii) uncomputable for probabilistic loops with branching statements and (iii) \skolem-hard for polynomial probabilistic loops without branching statements. Going beyond \skolem-hardness of \probspinv and \spinv are open challenges we aim to further study.

\begin{acks}
This research was supported by 
the European Research Council Consolidator Grant ARTIST 101002685, the 
Vienna Science and Technology Fund WWTF 10.47379/ICT19018 grant ProbInG,  and the SecInt Doctoral College funded by TU Wien.
We thank Manuel Kauers for providing details on sequences and algebraic relations and Toghrul Karimov for inspiring us to consider the orbit problem.
We thank the McGill Bellairs Research Institute for hosting the Bellairs 2023 workshop, whose fruitful discussions influenced parts of this work.
\end{acks}

\bibliographystyle{ACM-Reference-Format}
\bibliography{references}

\end{document}